\newtheorem{theorem}{\indent Theorem}
\newtheorem{lemma}{\indent Lemma}
\newtheorem*{proof}{\indent Proof}
\newtheorem{remark}{\indent Remark}
\newtheorem{proposition}{\indent Proposition}
\newcommand{\Rmnum}[1]{\expandafter\@slowromancap\romannumeral #1@}
\begin{document}

\title{Combating Interference for Over-the-Air Federated Learning: A Statistical Approach via RIS}

\author{Wei~Shi, Jiacheng~Yao, Wei~Xu,~\IEEEmembership{Fellow,~IEEE},
         Jindan~Xu,~\IEEEmembership{Member,~IEEE}, \\
         Xiaohu~You,~\IEEEmembership{Fellow,~IEEE}, 
         Yonina~C.~Eldar,~\IEEEmembership{Fellow,~IEEE},
         and Chunming~Zhao,~\IEEEmembership{Member,~IEEE}
\thanks{This work was supported in part by the  National Key Research and Development Program under Grant 2020YFB1806608, and in part by the Special Fund for Key Basic Research in Jiangsu Province No. BK20243015. (Corresponding author: Wei Xu).}
\thanks{W. Shi, J. Yao, W. Xu, X. You, and C. Zhao are with the National Mobile Communications Research Laboratory, Southeast University, Nanjing 210096, China, and are also with the Purple Mountain Laboratories, Nanjing 211111, China (e-mail: \{wshi, jcyao, wxu, xhyu, cmzhao\}@seu.edu.cn).}
\thanks{J. Xu is with the School of Electrical and Electronics Engineering, Nanyang Technological University, Singapore 639798, Singapore (e-mail: jindan.xu@ntu.edu.sg).}
\thanks{Y. C. Eldar is with the Faculty of Mathematics and Computer Science, Weizmann Institute of Science, Rehovot 7610001, Israel (e-mail:
yonina.eldar@weizmann.ac.il).}
}

\maketitle

\begin{abstract}
Over-the-air computation (AirComp) integrates analog communication with task-oriented computation, serving as a key enabling technique for communication-efficient federated learning (FL) over wireless networks. However, owing~to its analog characteristics, AirComp-enabled FL (AirFL) is vulnerable to both unintentional and intentional interference. In this~paper, we aim to attain robustness in AirComp aggregation against interference via reconfigurable intelligent surface (RIS) technology to artificially reconstruct wireless environments. Concretely, we establish performance objectives tailored for interference suppression in wireless FL systems, aiming to achieve unbiased gradient estimation and reduce its mean square error (MSE). Oriented at these objectives, we introduce the concept of phase-manipulated favorable propagation and channel hardening for AirFL, which relies on the adjustment of RIS phase shifts to realize statistical interference elimination and reduce the error variance of gradient estimation. Building upon this concept, we propose two robust aggregation schemes of power control and RIS phase shifts design, both ensuring unbiased gradient estimation in the presence of interference. Theoretical analysis of the MSE and FL convergence affirms the anti-interference capability of the proposed schemes. It is observed that computation and interference errors diminish by an order of $\mathcal{O}\left(\frac{1}{N}\right)$ where $N$ is the number of RIS elements, and the ideal convergence rate without interference can be asymptotically achieved by increasing $N$. Numerical results confirm the analytical results and validate the superior performance of the proposed schemes over existing baselines.
\end{abstract}

\begin{IEEEkeywords}
Federated learning (FL), over-the-air computation (AirComp), reconfigurable intelligent surface (RIS), favorable propagation/channel hardening, interference suppression.
\end{IEEEkeywords}
% \vspace{-10.pt}
\section{Introduction}
\IEEEPARstart{F}{ederated} learning (FL) has been recognized as a promising distributed learning technique to realize ubiquitous edge intelligence for the sixth-generation (6G) wireless networks \cite{2,3}. In a wireless FL system, multiple distributed edge devices are coordinated by a central parameter server (PS) to collaboratively train a global learning model without revealing their local data. More specifically, model parameters are exchanged among edge devices or with the PS, rather than raw data, which reduces the amount of transmitted data and helps protect data privacy. Due to these advantages, the implementation of FL algorithms over wireless networks has been recently studied to support a broad range of intelligent applications \cite{YJC,4,5}. However, the performance of wireless FL is constrained by limited spectrum resource and dynamics of wireless channels \cite{wireless1,wireless2,Eldar1,6}. Especially during uplink model parameters uploading process, communication overhead and latency increase proportionally to the number of participating devices, resulting in a significant performance bottleneck.

To support simultaneous massive uplink transmission and enhance the communication efficiency of wireless FL, over-the-air computation (AirComp) has emerged as a promising solution \cite{9,900,Eldar2}. AirComp merges the concurrent transmission of local updates and model aggregation over the air by exploiting the waveform superposition property of multiple access channels \cite{8,7,pAirFL}. This wireless channel reuse in over-the-air aggregation significantly reduces communication latency and enhances bandwidth utilization, enabling fast-convergent and communication-efficient wireless FL. Recently, several studies have been conducted on the AirComp-enabled FL (AirFL), including power control \cite{11}, device scheduling \cite{12}, and transceiver design\cite{15}.

Although AirComp provides significant performance gains, the analog aggregation nature makes it vulnerable to unintentional/intentional interference. The presence of interference imposes limitations on computational accuracy and consequently impedes the training process. Integrated communication and computation in AirFL contend with significant unintentional interference, including multi-cell interference, full-duplex interference, and multi-task interference. To address multi-cell interference, the authors in \cite{interference_JSAC} quantified FL convergence in the presence of distorted AirComp. Subsequently, cooperative multi-cell optimization is conducted leveraging the analytical findings in order to alleviate interference and balance resources among various FL tasks. Multiple-input multiple-output (MIMO)-based transceiver beamforming is exploited in \cite{multitaskFL} for FL task-oriented interference suppression. In addressing diverse unintentional interference, current solutions often hinge on incorporating a substantial number of antennas to reduce interference via highly directional beams. However, deploying extensive antennas at the transmitter is expensive.

Intentional interference, commonly referred to as malicious attacks, also poses a significant security challenge in AirFL. To cope with malicious attacks in FL, robust aggregation rules have been developed \cite{19,20}, most of which are based on the idea of comparing local updates from different devices and sorting out outliers at the server. However, individual values of local gradients are typically unavailable in AirFL due to the analog superposition of all local gradients over the air. In \cite{22}, a best effort voting (BEV) power control policy was proposed for AirFL by allowing local workers to transmit their gradients with maximum power. It focused on maximizing the transmit power rather than directly suppressing the attacks, which limits performance. The authors of \cite{24} and \cite{25} developed an AirFL transmission framework resilient to Byzantine attacks by introducing the idea of grouping. In this framework, distributed devices are categorized into multiple groups using different wireless resources (i.e., time and frequency) to transmit their model updates, which sacrifices the benefits of AirComp in fully utilizing wireless resources. Therefore, it is necessary to develop an effective robust AirFL framework to suppress the interference while retaining the benefits of AirComp.

As a cost-effective physical-layer technology, reconfigurable intelligent surface (RIS)-aided communications have received extensive investigation \cite{27,28,29,30,jxu,zyhe1}. An RIS comprises a large number of low-cost passive reflecting elements capable of independently controlling the amplitude and phase shifts of incident signals, enabling accurate beamforming \cite{28}. Utilizing its capability to reconstruct wireless propagation environments, RIS can enhance useful signals and suppress interference \cite{29}. Given its potential, RIS-empowered model aggregation for FL has gained significant attention in recent years \cite{RIS-FL-1,RIS-FL-2,RIS-FL-3,RIS-FL-4}. For instance, the authors of \cite{RIS-FL-1} proposed a simultaneous access scheme enabled by RIS to improve model aggregation performance, leading to a communication-efficient FL framework. Although some studies have investigated RIS-assisted AirFL, they mainly focused on link enhancement and beamforming \cite{RIS-AirFL-1,RIS-AirFL-2,RIS-AirFL-3,RIS-AirFL-4,RIS-AirFL-5,RIS-AirFL-6}, neglecting interference suppression. In this paper, we discover the ability of the RIS in terms of statistical interference elimination in AirFL, aimed at enhancing the aggregation robustness against interference. The main contributions of our work are summarized as follows:

\begin{itemize}
\item We establish performance objectives tailored for interference suppression in wireless FL systems to mitigate the impact of interference on FL convergence. Specifically, our objectives are to achieve unbiased gradient estimation while reducing its mean square error (MSE). Meeting these objectives enables rigorous theoretical convergence analysis, which makes it possible for the FL algorithm to achieve rapid convergence to the optimal point.

\item To realize unbiasedness and reduce the MSE of gradient estimation, a new concept of phase-manipulated favorable propagation and channel hardening enabled by RIS is first developed for AirFL. It achieves statistical interference elimination without requiring any instantaneous channel state information at the transmitter (CSIT). Based on this concept, we propose two representative robust aggregation schemes with different power control and RIS phase shift settings for AirFL. Both schemes are shown to be effective in achieving unbiased gradient estimation.

\item Accurate closed-form expressions are derived to evaluate the MSE of gradient estimation for both schemes. The obtained results reveal that increasing the number of RIS reflecting elements, $N$, effectively mitigates the impact of computation, interference, and noise errors by at least an order of $\mathcal{O}\!\left(\frac{1}{N}\right)$. In addition, Scheme~\Rmnum{1} achieves more precise gradient computation, while Scheme~\Rmnum{2} exhibits better efforts in interference and noise suppression.

\item Building upon the derived MSE, the FL convergence of the proposed schemes is analyzed, which confirms a convergence rate on the order of $\mathcal{O}\left(\frac{2\varpi_u}{\sqrt{T}}\right)$, where $T$ is the number of iterations and $\varpi_u$ is a constant related to the specific method. 
% Owing to smaller computation errors in MSE, Scheme \Rmnum{1} always exhibits a faster convergence. Moreover, 
It is shown that an ideal convergence rate without interference can be asymptotically achieved by increasing $N$. 
Numerical results are conducted to demonstrate the effectiveness of our proposed schemes and verify the analytical results in a variety of FL settings.
\end{itemize}

The rest of this paper is organized as follows. The models and objectives are presentd in Sections~\Rmnum{2} and~\Rmnum{3}, respectively. Section~\Rmnum{4} introduces the concept of phase-manipulated favorable propagation and channel hardening enabled by RIS, and proposes two robust aggregation schemes. In Section~\Rmnum{5}, we analyze the MSE and convergence of the proposed schemes. Simulation results and conclusions are in Sections \Rmnum{6}~and~\Rmnum{7}, respectively.

Throughout the paper, numbers, vectors, and matrices are represented by lower-case, boldface lower-case, and boldface uppercase letters, respectively. The operator $|\!\cdot\!|$ returns the absolute value of a complex number. If used with a set, $|\!\cdot\!|$ returns the cardinality of the set. The operator $\left\|\cdot\right\|$ returns the Euclidean norm of a vector. Let $\mathbb{R}$ and $\mathbb{C}$ denote the set of real and complex numbers, respectively. We use $\mathbb{E}[\cdot]$ and $\mathbb{V}[\cdot]$ to denote the expectation and variance of a random variable (RV), respectively. The operators $\Re\{\cdot\}$, $\Im\{\cdot\}$, and $\angle$ return the real part, imaginary part, and phase of a complex number, respectively. The superscripts $(\cdot)^T$, $(\cdot)^\ast$, and $(\cdot)^H$ stand for the transpose, conjugate, and conjugate-transpose operations, respectively. The symbol $\mathcal{CN}(\mathbf{x},\mathbf{\Sigma})$ is a circularly symmetric complex Gaussian distribution with mean $\mathbf{x}$ and covariance $\mathbf{\Sigma}$, $\mathrm{Exp}(\lambda)$ is the exponential distribution with rate parameter $\lambda$, and $\mathcal{U}(a,b)$ is a uniform distribution between $a$ and $b$.

% \vspace{-8.pt}
\section{System Model}
We consider an AirFL system as illustrated in Fig.~\ref{fig1} that comprises a central PS and $K$ target devices. 
The AirComp process is perturbed by external interference, e.g., from other cells, tasks, and attackers. 
The learning and communication models are described separately in the following.

\subsection{Learning Model}
\begin{figure}[!t]
\centering
\includegraphics[width=3.2in]{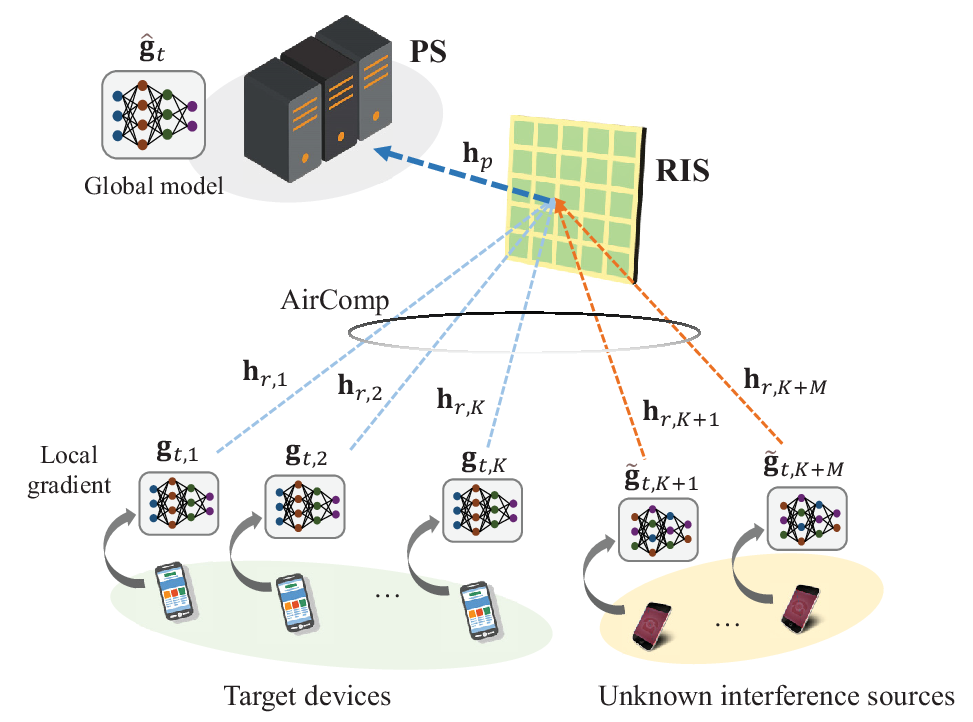}
\caption{The framework of an RIS-enhanced AirFL system with interference.}
% \vspace{-0.4cm}
\label{fig1}
\end{figure}
We first describe the FL process underpinning AirFL. 
% Considering a typical FL system comprising a central PS and $K$ target devices.
% The PS coordinates several target devices to train a shared machine learning model. 
Each target device $k\!\in\!\mathcal{K}\!\triangleq\!\!\{1,2,\cdots,K\}$ owns its local dataset $\mathcal{D}_k$ with $\vert \mathcal{D}_k \vert$ training samples. 
% Moreover, it is worth noting that the data distributions of different devices are non-IID in practice, which poses additional challenges to model training. 
% Hence, the local loss function of model parameters, denoted by $\mathbf{w}\in \mathbb{R}^d$, varies over different devices. 
The local loss function at device~$k$ is defined as 
\begin{align}
F_k (\mathbf{w},\mathcal{D}_k)= \frac{1}{\vert \mathcal{D}_k \vert} \sum_{\mathbf{u}\in \mathcal{D}_k}  \mathcal{L}(\mathbf{w},\mathbf{u}),
\end{align}
where $\mathbf{w}\!\in\!\mathbb{R}^D$ is the $D$-dimensional model parameter vector, $\mathbf{u}$ is the data sample selected from $\mathcal{D}_k$, and $\mathcal{L}(\mathbf{w},\mathbf{u})$ represents the sample-wise loss function.
Without loss of generality, we assume that all local datasets have a uniform size, i.e., $\vert\mathcal{D}_k\vert=D$, $\forall k\in \mathcal{\mathcal{K}}$.~\footnote{We assume a uniform size for all client weights to simplify notation and extract insightful observations, which aligns with similar setups in some existing schemes, e.g., \cite{8,11,RIS-AirFL-2,RIS-AirFL-3,RIS-AirFL-4}. In fact, our proposed aggregation scheme is extendable to scenarios with unbalanced aggregation weights \cite{FedNova}.}
The learning process aims to optimize the model parameter $\mathbf{w}$ to minimize the global loss function defined as
\begin{align}\label{eq1}
 F(\mathbf{w})=\frac{1}{K}\sum_{k\in\mathcal{K}} F_k(\mathbf{w},\mathcal{D}_k).
\end{align}

%We assume that the PS owns the knowledge of the trusted devices but cannot obtain any information about the possible Byzantine attackers. 
Distributed stochastic gradient descent (SGD) is adopted to minimize $F(\mathbf{w})$, which optimizes $\mathbf{w}$ in an iterative manner. 
% We first introduce the ideal training process without transmission distortions and interference. 
Specifically, the $t$-th round of model training is made up of the following steps:
\begin{itemize}
\item [1)]\emph{Model broadcasting}: The PS broadcasts the latest global model $\mathbf{w}_t$ to all devices. 
\item [2)]\emph{Local computing}: Based on the received global model $\mathbf{w}_t$, each target device computes its local gradient based on a local mini-batch $\mathcal{B}_{t,k}$ of size $b_k$, which is expressed as
\begin{align}\label{eq3}
\mathbf{g}_{t,k}\triangleq \nabla F_k(\mathbf{w}_{t},\mathcal{B}_{t,k})=\frac{1}{\vert\mathcal{B}_{t,k} \vert} \sum_{\mathbf{u}\in \mathcal{B}_{t,k}}   \nabla \mathcal{L}(\mathbf{w}_t,\mathbf{u}),
\end{align}
where $\mathcal{B}_{t,k}$ is selected from the local dataset $\mathcal{D}_k$.
\item [3)]\emph{Local updates uploading}: Each target device reports its local gradient, $\mathbf{g}_{t,k}\in\mathbb{R}^{D}$, to the PS, which Euclidean norm, $\Vert\mathbf{g}_{t,k}\Vert$, is upper bounded by a finite constant $G$.
% \textcolor{blue}{$\Vert\mathbf{g}_{t,k}\Vert\le G$}.
\item [4)]\emph{Model aggregation}: Upon receiving all the local gradients, the PS calculates the global gradient
\begin{align}\label{gradient}
\mathbf{g}_t = \frac{1}{K} \sum_{k=1}^K \mathbf{g}_{t,k},
\end{align}
and updates the global model according to
\begin{align}
\mathbf{w}_{t+1}=\mathbf{w}_t-\eta_t \mathbf{g}_t,
\end{align}
where $\eta_t$ is a chosen learning rate at $t$-th training round.
\end{itemize}
% The above steps iterate until a convergence condition is met.
% Iterating the above steps until the FL algorithm converges and a common FL model is available. 

% \vspace{-15.pt}
\subsection{Communication Model}
AirComp is adopted to realize efficient uploading and model aggregation in Fig.~\ref{fig1}. In AirComp, participating devices simultaneously upload the analog signals of local gradients to the PS, and hence a weighted summation of the local updates in (\ref{gradient}) is achieved by exploiting the waveform superposition nature of wireless channels. Since the analog aggregation of AirComp is vulnerable to interference, which includes both unintentional interference (e.g., inter-task/cell interference) and malicious attacks, we introduce the RIS technology to combat interference. The interference resilience of RIS is based on the concept of phase-manipulated favorable propagation discussed in Section~\Rmnum{4}. Here, we depict the basic framework of this RIS-empowered AirFL system.

As shown in Fig. \ref{fig1}, an RIS with $N$ reflecting elements is deployed to assist the AirFL system against interference. To facilitate analysis, we assume that there are $M$ unknown interference sources and denote the set of them by $\mathcal{M}\!\triangleq\!\{K\!+\!1,\cdots, K\!+\!M\}$. Then, within this AirFL framework, the received signal at the PS in the $t$-th round is expressed as
\begin{align}\label{received1}
\mathbf{y}_t=\sum_{k\in\mathcal{K}} h_k^E  \sqrt{p_k} \mathbf{g}_{t,k} +\sum_{m\in \mathcal{M}} h_m^E \sqrt{p_m} \tilde{\mathbf{g}}_{t,m}+\mathbf{z}_t,
\end{align}
where $p_i$ is the transmit power for device $i\in \mathcal{N}\!\triangleq\!\mathcal{K} \cup \mathcal{M}$, $\mathbf{z}_t$ is additive white Gaussian noise $\mathcal{CN}(\mathbf{0},\sigma^2\mathbf{I}_{D})$, and $\tilde{\mathbf{g}}_{t,m}\in\mathbb{R}^{D}$ denotes the signal vector transmitted by the interferer $m$. The interference signal is assumed to be arbitrary values with normalized power, i.e., $\Vert\tilde{\mathbf{g}}_{t,m}\Vert=1$. The cascaded channel from device $i$ to the PS through the RIS, $h_i^{E}$, is given by 
\begin{align}
h_i^E=\beta_i\mathbf{h}_p^H\mathbf{\Theta}\mathbf{h}_{r,i}, \enspace \forall i \in \mathcal{N},
\end{align}
where $\beta_i$ denotes the equivalent large-scale fading coefficient, which represents the product of the large-scale fading coefficients of the RIS-PS and device $i$-RIS links, $\mathbf{h}_p\!\sim\!\mathcal{CN}(\mathbf{0},\mathbf{I}_{N})$ and $\mathbf{h}_{r,i}\!\sim\!\mathcal{CN}(\mathbf{0},\mathbf{I}_{N})$ denote the small-scale fading channel from the RIS to the PS and device $i$ to the RIS, respectively, $\mathbf{\Theta}\triangleq {\rm diag}\left\{{\rm e}^{j\theta_1},\ldots,{\rm e}^{j\theta_n},\ldots,{\rm e}^{j\theta_N}\right\}$ is the reflection matrix of the RIS, and $\theta_n\in[0,2\pi)$ is the phase shift introduced by the $n$-th RIS reflecting element, which is set to be invariant within a communication round \cite{RIS-AirFL-3}. The channel coefficients of $\mathbf{h}_p$ and $\mathbf{h}_{r,k}$ are assumed to be perfectly known to the PS. 
% Extensive approaches have been proposed for channel estimation of RIS-aided links \cite{32,33}.
For practical consideration, we assume that the PS cannot acquire any knowledge of $\mathbf{h}_{r,m}$.
Meanwhile, assuming that direct links between the PS and edge devices deployed in coverage-challenged areas are heavily obstructed by trees, buildings, and other environmental factors, we neglect these direct links due to their comparatively lower channel gain compared to RIS-related channels. This assumption is commonly adopted in typical RIS-assisted communication scenarios \cite{link,link1,link2}.

Based on the received signal in (\ref{received1}), the PS computes an estimated global gradient as
\begin{align} \label{ghata}
\hat{\mathbf{g}}_t\!=\!\frac{\Re\!\left\{\mathbf{y}_t\right\}}{\lambda}\!=\!\sum_{k\in\mathcal{K}}\!{{\ell_k}\mathbf{g}_{t,k}}\!+\!\!\sum_{m\in\mathcal{M}}\!{{\ell_m}{\tilde{\mathbf{g}}}_{t,m}}\!+\!\bar{\mathbf{z}}_t,
\end{align}
where $\lambda$ is a denoising factor introduced by the PS, $\bar{\mathbf{z}}_t\triangleq \frac{\Re\{\mathbf{z}_t\}}{\lambda}$ is the equivalent noise, and the aggregation coefficient ${\ell_k}$ and interference coefficient ${\ell_m}$ are, respectively, expressed as
\begin{align} \label{aggregation}
{\ell_k}\triangleq\frac{\beta_k\sqrt{p_k}}{\lambda}\Re\!\left\{\mathbf{h}_p^H\mathbf{\Theta}\mathbf{h}_{r,k}\!\right\},
\end{align}
and
\begin{align} \label{interference}
{\ell_m}\triangleq\frac{\beta_m\sqrt{p_m}}{\lambda}\Re\!\left\{\mathbf{h}_p^H\mathbf{\Theta}\mathbf{h}_{r,m}\!\right\}.
\end{align}
% where the transmit power $p_k$ and RIS phase shifts $\mathbf{\Theta}$ are adjustable.

\begin{figure*}[!t]
\centering
\includegraphics[width=6.8in]{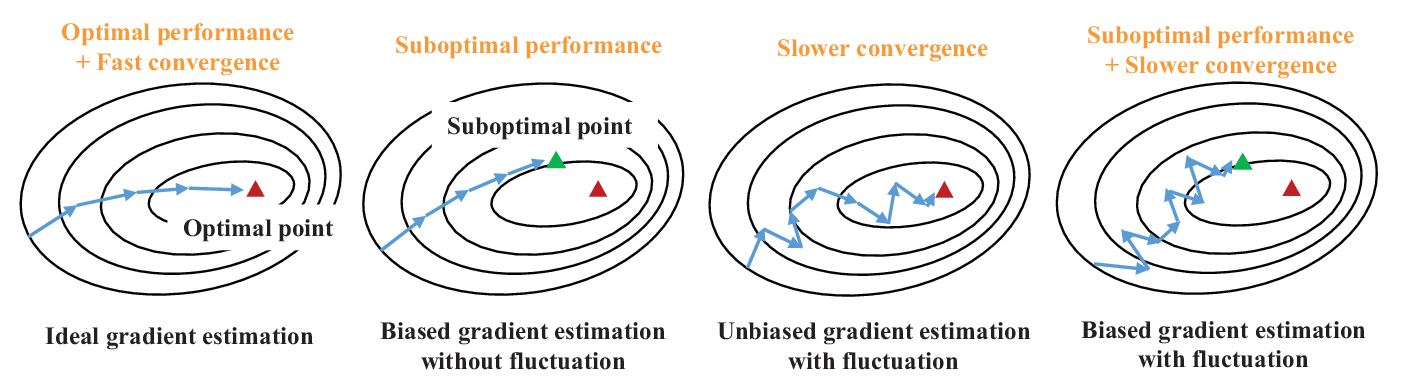}
\caption{The effects of long-term bias and instantaneous random fluctuation on the convergence performance and rate.}
\label{fig1-1}
\end{figure*}

By comparing (\ref{ghata}) and (\ref{gradient}), we see that interference affects the performance of gradient estimation in two aspects:
\begin{itemize}
    \item First, from a \emph{long-term statistical perspective}, the interference coefficient ${\ell_m}$ introduces bias into the gradient estimation.
    \item Second, from an \emph{instantaneous perspective}, the existence of interference intensifies random fluctuations of gradient estimation, which increases the error variance.
\end{itemize}
As a result, both the \emph{long-term bias} and \emph{instantaneous random fluctuations} caused by interference can adversely affect the AirFL convergence. In this paper, our primary objective for interference suppression is to minimize its detrimental effect on signal quality, thereby enhancing the overall reliability and convergence of the model. While conventional communication systems use metrics like the signal-to-interference-plus-noise ratio (SINR) to evaluate signal quality, such metrics do not directly correlate with FL convergence, leading to suboptimal designs. Therefore, we begin by establishing performance objectives tailored for interference suppression in wireless FL systems.

\section{Design of Performance Objectives}

In this section, we investigate the design of performance objectives aimed at minimizing the impact of interference in FL tasks. The most straightforward approach would be to minimize the global loss function in (\ref{eq1}). However, accurately characterizing the loss function under communication errors in a closed form remains an open challenge, making direct optimization infeasible. Alternative studies, such as \cite{11}, rely on theoretical convergence analysis to establish an upper bound for the loss function, subsequently adopting this bound as the performance objective. Numerical analyses in \cite{5,7} show that the derived upper bound is not strictly tight and can exhibit notable deviation from the true loss function, posing challenges in ensuring optimal convergence performance.

To theoretically characterize the impact of interference on FL convergence and design targeted performance objectives, we begin with a universal performance analysis framework suitable for scenarios involving imperfect gradient estimations. Specifically, we denote the estimated global gradient in the $t$-th round in a general form, i.e.,
\begin{align}
    \hat{\mathbf{g}}_t=\mathbf{g}_t +\bm{\varepsilon}_t,
\end{align}
where $\bm{\varepsilon}_t$ represents the random estimation error, which accounts for all possible sources of error, including misalignment, interference, and additive noise. We denote its statistical properties by defining $\mathbf{e}_1\triangleq \mathbb{E}\left[\bm{\varepsilon}_t\right]$ and  $e_2\triangleq \mathbb{E}\left[\Vert \bm{\varepsilon}_t\Vert^2\right]$. 
% Then, according to  \cite[\emph{Theorem 1}]{11}, we conclude the following key observations regarding the FL convergence behavior:
Since our focus is on transmission design at the physical layer, we do not consider optimization at the algorithmic level. As such, the only tunable parameters are $\mathbf{e}_1$ and $e_2$, while other system-level factors, such as data heterogeneity, are kept fixed. This ensures that the proposed transmission method can be applied to a broad range of learning scenarios.
Then, according to \cite[\emph{Theorem 1}]{11}, we conclude the following key observations regarding the impact of $\mathbf{e}_1$ and $e_2$ on FL convergence behavior:
\begin{itemize}
    \item The ultimate performance of the FL algorithm after convergence primarily depends on the first-order moment of the gradient estimation error, $\mathbf{e}_1$. With a sufficiently small learning rate, global optimum convergence is asymptotically achievable if $\mathbf{e}_1=\mathbf{0}$, indicating an unbiased gradient estimation. Otherwise, the algorithm converges to a biased local optimum. This observation is also supported by the findings in \cite{biased-ICML}.
    \item For a fixed $\mathbf{e}_1$, a smaller second-order moment $e_2$, which represents the MSE of gradient estimation, can accelerate convergence and help approach the optimal point.
\end{itemize}

Fig.~\ref{fig1-1} provides an intuitive illustration of these two observations. From a qualitative perspective, the long-term bias can lead to an astray model after the training, ultimately degrading AirFL’s convergence performance, especially in the presence of non-independent and identically distributed (non-IID) local datasets~\cite{35}. Additionally, instantaneous random fluctuations introduce uncertainty into each update step, which hinders the gradient descent process and leads to slower convergence rates.

Building upon these observations, it is evident that the statistics $\mathbf{e}_1$ and $e_2$ are critical factors affecting the FL convergence. Consequently, to effectively mitigate the impact of interference in FL tasks, our goal is to achieve \textbf{unbiasedness} of gradient estimation to seek optimality and a \textbf{minimum possible MSE} is also desirable to speed up convergence.
Specific definitions of these two performance objectives are as follows.

\textbf{1) Unbiasedness}: According to \cite[Lemma~1]{35}, the expectation of the estimated global gradient $\hat{\mathbf{g}}_t$ in (\ref{ghata}) is equal to the ground-truth global gradient $\mathbf{g}_t$ defined in (\ref{gradient}), i.e., $\mathbb{E}\left[\hat{\mathbf{g}}_t\right]=\mathbf{g}_t$, which is equivalent to
\begin{align}\label{Unbiasedness}
    \mathbb{E}\left[{\ell_k}\right]=\frac{1}{K},~~\mathbb{E}\left[{\ell_m}\right]=0,
\end{align}
where the expectations are taken over the distributions of~channel fadings $\mathbf{h}_p$ and $\mathbf{h}_{r,i}$. Therefore, to achieve unbiasedness, the imbalance of aggregation coefficients $\{\ell_k\}_{k\in \mathcal{K}}$, caused by heterogeneous large-scale fading coefficient $\beta_k$ and random small-scale fading channel $\mathbf{h}_{r,k}$, and the interference coefficients $\{\ell_m\}_{m\in \mathcal{M}}$ should be statistically eliminated.

\textbf{2) Minimum possible MSE}: The MSE of gradient estimation \cite{Eldar3}, given by
\begin{align}\label{MSE}
{\rm MSE}=\mathbb{E}\left[\Vert{\hat{\mathbf{g}}}_t-\mathbf{g}_t\Vert^2\right],
\end{align}
is expected to be as small as possible to expedite convergence, where the expectation is taken over the distributions of channel fading, local gradient $\mathbf{g}_{t,k}$, interference $\tilde{\mathbf{g}}_{t,m}$, and noise $\mathbf{z}_t$.

By meeting these two objectives, we enable a rigorous theoretical convergence analysis, as detailed in \cite{11}, which shows that the FL algorithm can achieve rapid convergence to the optimal point under a sufficiently small learning rate. In the following section, we explore how to accomplish these objectives through the joint design of RIS phase shifts and transceiver signal processing.

% \vspace{-3.pt}
\section{RIS-Empowered Robust Aggregation for AirFL}
In order to attain unbiased estimation with minimized MSE, we first introduce the concept of phase-manipulated favorable propagation and channel hardening enabled by RIS. Based on this concept, we develop two robust aggregation schemes with power allocation and RIS phase shift settings for AirFL.

% \vspace{-5.pt}
\subsection{Phase-Manipulated Favorable Propagation and Channel Hardening for AirComp}
Recall that in conventional massive MIMO systems, the asymptotic vector-wise orthogonality among different wireless channel vectors provides favorable propagation, and the asymptotic element-wise orthogonality of the channel vector ensures channel hardening \cite{favorable}. These properties align seamlessly with the requirements of robust aggregation for AirFL, aiming to filter out unwanted interference signals and diminish the error variance, respectively. Recent study has utilized the properties of favorable propagation and channel hardening in massive MIMO for realizing AirFL \cite{36}.

Nevertheless, the enhancement in AirFL computational performance enabled by the massive MIMO comes at the expense of requiring large-scale receiving antennas, leading to a significant escalation in hardware cost. Moreover, unlike traditional uplink transmission, AirComp for computation tasks is mostly single-stream transmission. Thus, introducing additional radio frequency (RF) links merely enhances diversity gains, resulting in superfluous utilization. Hence, we propose to replace costly large-scale antennas with lower-cost RIS, which fortunately demonstrates that same functions as large-scale antennas are attained in the AirFL through phase manipulation at the RIS. 
\begin{theorem} \label{theorem1}
We set the RIS phase shifts as     
    \begin{align}\label{eq10}
    \theta_n=-\angle h_{p,n}^\ast+\angle\sum_{k\in \mathcal{K}}w_k h_{r,k,n}^\ast, \enspace \forall n=1,\cdots,N,
    \end{align}
where $w_k\!>\!0$ is an arbitrary weight factor for device $k$, and $h_{p,n}$ and $h_{r,k,n}$ are the $n$-th elements of channel vectors $\mathbf{h}_p$ and $\mathbf{h}_{r,k}$, respectively. This setting preserves the signal from target devices and achieves statistical interference elimination, accomplishing favorable propagation. In particular, it yields
\begin{align}\label{th1_1}
    \mathbb{E}\left[{\ell_k}\right]= \frac{\pi N \beta_k\sqrt{p_k}w_k}{4\lambda\sqrt{\sum_{i=1}^{K}w_i^2}},\enspace \mathbb{E}\left[{\ell_m}\right]=0.
\end{align}
\end{theorem}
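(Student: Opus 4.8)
The plan is to substitute the prescribed phase shifts of (\ref{eq10}) directly into the definitions (\ref{aggregation}) and (\ref{interference}), and then reduce the problem to a single self-contained per-element expectation that can be evaluated through the circular symmetry of the channels. First I would expand $\mathbf{h}_p^H\mathbf{\Theta}\mathbf{h}_{r,i}=\sum_{n=1}^N h_{p,n}^\ast e^{j\theta_n}h_{r,i,n}$ and insert the phase design. Writing $h_{p,n}^\ast=|h_{p,n}|e^{-j\angle h_{p,n}}$ and noting $-\angle h_{p,n}^\ast=\angle h_{p,n}$, the RIS deliberately cancels the phase of the RIS--PS link, so that $h_{p,n}^\ast e^{j\theta_n}=|h_{p,n}|e^{j\psi_n}$, where I abbreviate $\psi_n\triangleq\angle\sum_{k\in\mathcal{K}}w_k h_{r,k,n}^\ast$. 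This collapses the cascaded term into $\sum_n|h_{p,n}|e^{j\psi_n}h_{r,i,n}$. Because $\mathbf{h}_p$ is independent of all RIS--device channels and $\mathbb{E}[|h_{p,n}|]=\tfrac{\sqrt{\pi}}{2}$ (the mean of a Rayleigh magnitude of a unit-variance complex Gaussian), the expectation factorizes as $\mathbb{E}[\Re\{\mathbf{h}_p^H\mathbf{\Theta}\mathbf{h}_{r,i}\}]=\tfrac{\sqrt{\pi}}{2}\sum_{n=1}^N\mathbb{E}[\Re\{e^{j\psi_n}h_{r,i,n}\}]$.

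The crux is the per-element term $\mathbb{E}[\Re\{e^{j\psi_n}h_{r,k,n}\}]$ for a target device, which is delicate because $\psi_n$ is the phase of a sum $s_n\triangleq\sum_{i}w_i h_{r,i,n}^\ast$ that itself contains $h_{r,k,n}^\ast$. I would resolve this self-dependence by a jointly-Gaussian orthogonal decomposition: since the $h_{r,i,n}$ are i.i.d. $\mathcal{CN}(0,1)$, we have $s_n\sim\mathcal{CN}(0,W)$ with $W\triangleq\sum_i w_i^2$, and circular symmetry forces $\mathbb{E}[h_{r,k,n}s_n^\ast]=0$ while $\mathbb{E}[h_{r,k,n}s_n]=w_k$. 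This lets me write $h_{r,k,n}=\tfrac{w_k}{W}s_n^\ast+\tilde h_{k,n}$, where $\tilde h_{k,n}$ is uncorrelated with both $s_n$ and $s_n^\ast$, hence independent of $s_n$. Substituting and using $\tfrac{s_n}{|s_n|}s_n^\ast=|s_n|$ gives $e^{j\psi_n}h_{r,k,n}=\tfrac{w_k}{W}|s_n|+e^{j\psi_n}\tilde h_{k,n}$, and the residual term has zero mean since $\tilde h_{k,n}$ is zero-mean and independent of $\psi_n$. Finally, $\mathbb{E}[|s_n|]=\tfrac{\sqrt{\pi}}{2}\sqrt{W}$ (again a Rayleigh mean, now with variance $W$), so $\mathbb{E}[\Re\{e^{j\psi_n}h_{r,k,n}\}]=\tfrac{\sqrt{\pi}\,w_k}{2\sqrt{W}}$. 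Summing over the $N$ elements and multiplying by $\tfrac{\sqrt{\pi}}{2}$ and the prefactor $\beta_k\sqrt{p_k}/\lambda$ then yields exactly the first expression in (\ref{th1_1}).

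The interference case $\mathbb{E}[\ell_m]=0$ is comparatively immediate: the channel $h_{r,m,n}$ of an interferer enters neither $\psi_n$ (which depends only on the target-device channels) nor $|h_{p,n}|$, and it is zero-mean, so $\mathbb{E}[e^{j\psi_n}h_{r,m,n}]=\mathbb{E}[e^{j\psi_n}]\,\mathbb{E}[h_{r,m,n}]=0$ element-wise and the entire sum vanishes. The main obstacle I anticipate lies in making the orthogonal decomposition rigorous, namely in justifying that uncorrelatedness of $\tilde h_{k,n}$ with \emph{both} $s_n$ and its conjugate $s_n^\ast$ (not merely $s_n$) is precisely what upgrades it to genuine statistical independence from the phase $\psi_n$. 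This hinges on the circularly symmetric Gaussian structure and must be argued carefully before the residual term can be dropped; the remaining steps are routine Rayleigh-moment evaluations.
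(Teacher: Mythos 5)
Your proposal is correct and reaches the same per-element value $\frac{\sqrt{\pi}}{2}\frac{w_k}{\sqrt{\sum_i w_i^2}}$ as the paper, but by a genuinely different and noticeably lighter route. The paper writes the per-element term with $a=w_k h_{r,k,n}$ and $b=\sum_{i\neq k}w_i h_{r,i,n}$, massages $\Re\{a(a^\ast+b^\ast)/|a^\ast+b^\ast|\}$ into $\frac{|a+b|^2+|a|^2-|b|^2}{2|a+b|}$, and then evaluates $\mathbb{E}[|a|^2/|a+b|]$ and $\mathbb{E}[|b|^2/|a+b|]$ via a dedicated lemma on correlated Rayleigh variables (bivariate Rayleigh PDF, modified Bessel functions, two table integrals, and explicit computation of the correlation coefficients $\rho_1,\rho_2$). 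You instead invoke the Gaussian orthogonality principle: project $h_{r,k,n}$ onto $s_n^\ast$, observe that the residual is uncorrelated with both $s_n$ and $s_n^\ast$ and hence, by joint circular Gaussianity, independent of $s_n$ and of its phase $\psi_n$, so that the only surviving term is $\frac{w_k}{W}\mathbb{E}[|s_n|]$ --- a single Rayleigh mean. The step you flag as delicate is in fact standard and airtight here: $(\tilde h_{k,n},s_n)$ is a jointly Gaussian real four-dimensional vector, and the two complex conditions $\mathbb{E}[\tilde h_{k,n}s_n]=\mathbb{E}[\tilde h_{k,n}s_n^\ast]=0$ are exactly equivalent to the vanishing of all four real cross-covariances, which gives independence. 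Your argument for $\mathbb{E}[\ell_m]=0$ coincides with the paper's. What your approach buys is the elimination of the bivariate-Rayleigh/Bessel machinery entirely and a transparent explanation of where the factor $w_k/\sqrt{\sum_i w_i^2}$ comes from (it is the normalized correlation between $h_{r,k,n}$ and the steering sum); what the paper's approach buys is a self-contained real-variable computation that never needs the complex-Gaussian independence argument, at the cost of considerably more integral manipulation.
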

\begin{proof}
    See Appendix \ref{appB}. $\hfill\blacksquare$
\end{proof}

From (\ref{th1_1}), it is intuitive to linearly scale up the denoising factor, $\lambda$, with $N$, to attain unbiased estimation in (\ref{Unbiasedness}). Under such a setting for $\lambda$, we further verify in the following theorem that a large-scale RIS also induces channel hardening.

\begin{theorem} \label{theorem2}
When the RIS phase shifts are configured according to (\ref{eq10}) and $\lambda$ scales linearly with $N$, both variances of the aggregation coefficient $\ell_k$ and interference coefficient $\ell_m$ diminish by the order of $\mathcal{O}\left(\frac{1}{N}\right)$.
\end{theorem}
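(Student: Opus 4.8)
The plan is to prove Theorem~\ref{theorem2} by directly computing the second-order statistics of the aggregation and interference coefficients defined in (\ref{aggregation}) and (\ref{interference}), under the phase-shift setting (\ref{eq10}). With $\lambda$ scaling linearly in $N$, I write $\lambda=c N$ for some constant $c>0$, so that $\mathbb{V}[\ell_k]=\frac{\beta_k^2 p_k}{c^2 N^2}\,\mathbb{V}\bigl[\Re\{\mathbf{h}_p^H\mathbf{\Theta}\mathbf{h}_{r,k}\}\bigr]$ and analogously for $\ell_m$. The whole task therefore reduces to showing that the variance of the real part of the quadratic form $\mathbf{h}_p^H\mathbf{\Theta}\mathbf{h}_{r,i}=\sum_{n=1}^N h_{p,n}^\ast\, e^{j\theta_n} h_{r,i,n}$ grows only on the order of $\mathcal{O}(N)$ rather than $\mathcal{O}(N^2)$; dividing by $N^2$ from the denoising factor then yields the claimed $\mathcal{O}(1/N)$ decay.

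First I would handle the target-device coefficient $\ell_k$. Substituting the phase choice (\ref{eq10}), the $n$-th summand becomes $|h_{p,n}|\,\bigl(\sum_{i} w_i h_{r,i,n}^\ast\bigr)/\bigl|\sum_{i} w_i h_{r,i,n}^\ast\bigr|\cdot h_{r,k,n}$, so each term is a bounded-magnitude random variable whose randomness comes from the independent per-element channel fadings. Because the fading is independent across RIS elements $n$, the sum $\Re\{\mathbf{h}_p^H\mathbf{\Theta}\mathbf{h}_{r,k}\}$ is a sum of $N$ independent (and identically distributed, by the i.i.d.\ channel assumption) random terms. The variance of a sum of independent terms is the sum of the individual variances, each of which is a finite constant independent of $N$; hence $\mathbb{V}\bigl[\Re\{\mathbf{h}_p^H\mathbf{\Theta}\mathbf{h}_{r,k}\}\bigr]=\mathcal{O}(N)$, giving $\mathbb{V}[\ell_k]=\mathcal{O}(1/N)$. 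The interference coefficient $\ell_m$ is treated the same way, except that $\theta_n$ is statistically independent of $\mathbf{h}_{r,m}$ (since the PS has no knowledge of $\mathbf{h}_{r,m}$), which makes $\mathbb{E}[\ell_m]=0$ from Theorem~\ref{theorem1} and again leaves a sum of $N$ independent zero-mean terms, so $\mathbb{V}[\ell_m]=\mathbb{E}[\ell_m^2]=\mathcal{O}(1/N)$.

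The main obstacle I anticipate is the per-element variance computation for $\ell_k$, where the phase $\theta_n$ depends on $\mathbf{h}_{r,k,n}$ itself through the coupled term $\sum_i w_i h_{r,i,n}^\ast$. Unlike the interference case, the phase and the channel are correlated, so I cannot simply factor the expectation; I would need to condition on $\{h_{r,i,n}\}_i$, evaluate the resulting moments of $|h_{p,n}|$ (a Rayleigh magnitude, giving the $\pi/4$-type constants already appearing in (\ref{th1_1})), and then take expectation over the joint Gaussian fadings. This requires carefully tracking the ratio $h_{r,k,n}/\bigl|\sum_i w_i h_{r,i,n}\bigr|$ and its second moment. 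I would proceed by computing $\mathbb{E}[|\ell_k|^2]$ in closed form and subtracting $(\mathbb{E}[\ell_k])^2$ from Theorem~\ref{theorem1}; the leading $N^2$ terms in these two quantities must cancel so that only an $\mathcal{O}(N)$ residual survives, which is precisely the channel-hardening effect. Verifying this cancellation exactly is the delicate step, and I expect the bulk of the technical computation—likely relegated to an appendix—to live there, while the interference-side bound follows more routinely by independence.
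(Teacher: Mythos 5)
Your proposal is correct, and its core argument is a genuinely more elementary route than the paper's. The paper (Appendix~\ref{appC}) computes $\mathbb{E}[u_k^2]$ exactly: it expands the square into diagonal and cross terms over RIS elements, evaluates the diagonal term via the imaginary-part second moment $\mathbb{E}\bigl[(\Im\{a(a^\ast+b^\ast)/|a^\ast+b^\ast|\})^2\bigr]$ using Lemmas~\ref{lemma4}--\ref{lemma3} and a Bessel-type integral, and then explicitly verifies that the $\pi^2N(N-1)/16$ cross term cancels against $(\mathbb{E}[u_k])^2$, leaving an $\mathcal{O}(N)$ residual; the same machinery gives $\mathbb{E}[u_m^2]=N/2$. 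Your second paragraph short-circuits all of this: since the $n$-th summand of $\mathbf{h}_p^H\mathbf{\Theta}\mathbf{h}_{r,i}$ under (\ref{eq10}) depends only on $h_{p,n}$ and $\{h_{r,i,n}\}_{i\in\mathcal{K}\cup\{m\}}$, the summands are independent across $n$, so $\mathbb{V}[u_i]=\sum_n\mathbb{V}[\cdot]\leq N\,\mathbb{E}[|h_{p,1}|^2|h_{r,i,1}|^2]=N$, and dividing by $\lambda^2=\Theta(N^2)$ gives the claim with no cancellation to track. (One small slip: the summands are not "bounded-magnitude" --- each has magnitude $|h_{p,n}||h_{r,i,n}|$, which is unbounded --- but they have finite second moment, which is all you use.) Note also that your third paragraph is unnecessarily pessimistic: the "delicate $N^2$ cancellation" you anticipate only arises if you abandon the independent-sum decomposition and compute $\mathbb{E}[u_k^2]$ as a whole, which is exactly what the paper does. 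The trade-off is that your order argument does not produce the closed-form constants $\frac{N}{2}+\frac{(8-\pi^2)w_k^2N}{16\sum_i w_i^2}$ and $\frac{N}{2}$, which the paper reuses verbatim in the MSE derivation of Theorem~\ref{theorem3}; so the exact computation is not wasted effort in context, but for Theorem~\ref{theorem2} alone your argument suffices and is cleaner.
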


\begin{proof}
    See Appendix \ref{appC}. $\hfill\blacksquare$
\end{proof}

As $N\to \infty$, the variances of $\ell_k$ and $\ell_m$ tend towards zero, i.e., the aggregation coefficient $\ell_k$ and interference coefficient $\ell_m$ are approximated as constants devoid of any fluctuations. It implies that the channel hardening effect, typically achieved through costly extensive antennas, is also attainable by setting the RIS phase shifts in (\ref{eq10}) for low-cost RIS elements.

In general, by leveraging the favorable propagation and channel hardening, we can anticipate realizing the~objectives in Section~\Rmnum{3} through the following aggregation schemes. 

% \vspace{-10.pt}
\subsection{Proposed Robust Aggregation Schemes}

By exploiting the RIS phase shifts in \emph{Theorem~\ref{theorem1}}, we statistically eliminate interference. In addition, it is imperative to handle the imbalanced aggregation coefficients $\ell_k$, ensuring the unbiasedness in~(\ref{Unbiasedness}). Observing (\ref{th1_1}), we seek to find the transmit power $p_k$, weight factor $w_k$, and denoising factor $\lambda$ so that the following equality is ensured.
\begin{align}\label{eq15}
    \mathbb{E}\left[{\ell_k}\right]= \frac{\pi N \beta_k\sqrt{p_k}w_k}{4\lambda\sqrt{\sum_{i=1}^{K}w_i^2}}=\frac{1}{K}, \enspace \forall k \in \mathcal{K}.
\end{align}
The imbalance stems from the heterogeneity of large-scale fading coefficients, $\{\beta_k\}_{k\in \mathcal{K}}$, necessitating its offset by adjusting $p_k$ and $w_k$. In view of this, we develop the following schemes.~\footnote{Any combination of transmit power $p_k$ and weight factor $w_k$ that satisfies (\ref{eq15}) constitutes a valid transmission scheme, effectively eliminating the large-scale heterogeneity introduced by $\beta_k$ at the first-order moment. In this work, we propose two representative schemes that separately design $w_k$ and $p_k$, achieving performance advantages in gradient computation and interference suppression, respectively, as demonstrated in the following section.}

\subsubsection{Proposed Transmission  Scheme \Rmnum{1}}
In this scheme, we adjust $p_k$ to eliminate the large-scale heterogeneity. The transmit power $p_k$ is set to $\sqrt{p_k}\!=\!\beta_k^{-1}\zeta$, where $\zeta$ is a scaling factor to satisfy the transmit power constraint, i.e., $p_k\Vert\mathbf{g}_{t,k}\Vert^2\!\le\! P_k$. This translates to $\zeta\!=\!\frac{\min\limits_{k\in\mathcal{K}}{\sqrt{P_k}\beta_k}}{G}$, where $P_k$ represents the maximum transmit power for device $k\!\!\in\!\!\mathcal{K}$. This setting resembles the idea of channel inversion~\cite{8}, but we only invert the large-scale coefficients without the necessity for instantaneous CSIT.

Given that the heterogeneity of $\{\beta_k\}_{k\in \mathcal{K}}$ in (\ref{eq15}) is entirely eliminated, we have the following proposition.

\begin{proposition} \label{proposition2}
By setting $\sqrt{p_k}\!=\!\beta_k^{-1}\zeta$, the denoising factor $\lambda=\frac{\pi N\sqrt K\min\limits_{k\in\mathcal{K}}{\sqrt{P_k}\beta_k}}{4G}$, and $w_k=1$, i.e., the RIS phase shifts
    \begin{align}\label{RISphase1}
    \theta_n=-\angle h_{p,n}^\ast+\angle\sum_{k=1}^{K}h_{r,k,n}^\ast,\enspace \forall n=1,\cdots,N,
    \end{align}
the gradient estimation in (\ref{ghata}) at the PS is unbiased.
\end{proposition}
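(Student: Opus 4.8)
The plan is to reduce the unbiasedness claim, as characterized in (\ref{Unbiasedness}), to a direct substitution into the first-order moment expression already furnished by \emph{Theorem~\ref{theorem1}}. Recall that unbiasedness requires both $\mathbb{E}[\ell_k]=\frac{1}{K}$ for every $k\in\mathcal{K}$ and $\mathbb{E}[\ell_m]=0$ for every $m\in\mathcal{M}$. The latter condition is inherited verbatim from (\ref{th1_1}): the RIS phase shifts in (\ref{RISphase1}) are precisely the specialization of (\ref{eq10}) to the weights $w_k=1$, and the interference statistic $\mathbb{E}[\ell_m]=0$ is unaffected by the choices of $p_k$, $w_k$, and $\lambda$. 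Hence the entire burden of the proof falls on verifying the aggregation-coefficient identity $\mathbb{E}[\ell_k]=\frac{1}{K}$.

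For this identity I would proceed in three substitution steps. First, setting $w_k=1$ collapses the normalization term to $\sqrt{\sum_{i=1}^{K}w_i^2}=\sqrt{K}$, so that (\ref{th1_1}) reads $\mathbb{E}[\ell_k]=\frac{\pi N\beta_k\sqrt{p_k}}{4\lambda\sqrt{K}}$. Second, substituting $\sqrt{p_k}=\beta_k^{-1}\zeta$ produces the crucial cancellation $\beta_k\sqrt{p_k}=\zeta$, which removes all dependence on the device-specific large-scale fading coefficient $\beta_k$; this is exactly the mechanism by which the channel-inversion power control neutralizes large-scale heterogeneity in (\ref{eq15}). Third, inserting the prescribed $\lambda=\frac{\pi N\sqrt{K}\min\limits_{j\in\mathcal{K}}\sqrt{P_j}\beta_j}{4G}$ together with $\zeta=\frac{\min\limits_{j\in\mathcal{K}}\sqrt{P_j}\beta_j}{G}$ makes the factors $\pi N$, the factor of $4$, the constant $G$, and the quantity $\min\limits_{j\in\mathcal{K}}\sqrt{P_j}\beta_j$ cancel in pairs, leaving $\mathbb{E}[\ell_k]=\frac{1}{K}$ as required. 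This completes the unbiasedness verification.

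Finally, although the proposition statement concerns only unbiasedness, I would append a short feasibility check confirming that the chosen power $p_k=\beta_k^{-2}\zeta^2$ respects the transmit-power budget $p_k\Vert\mathbf{g}_{t,k}\Vert^2\le P_k$. Using the gradient-norm bound $\Vert\mathbf{g}_{t,k}\Vert\le G$ and the definition of $\zeta$, one obtains $p_k\Vert\mathbf{g}_{t,k}\Vert^2\le p_k G^2=(\min\limits_{j\in\mathcal{K}}\sqrt{P_j}\beta_j)^2/\beta_k^2\le P_k$, where the last inequality follows since the minimum is no larger than $\sqrt{P_k}\beta_k$. I do not anticipate any genuine obstacle here: all the analytical difficulty, namely the expectation over the random channels $\mathbf{h}_p$ and $\mathbf{h}_{r,i}$ that yields the $\frac{\pi}{4}$ constant and the order-$N$ scaling, has already been absorbed into \emph{Theorem~\ref{theorem1}}. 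What remains is purely algebraic bookkeeping, whose only subtlety is confirming that each prescribed constant is tuned so that the cancellations land exactly on $\frac{1}{K}$.
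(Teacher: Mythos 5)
Your proof is correct and follows exactly the paper's route: the paper likewise proves this proposition by directly applying \emph{Theorem~\ref{theorem1}} and substituting the prescribed $w_k$, $p_k$, and $\lambda$ into (\ref{th1_1}) to obtain $\mathbb{E}[\ell_k]=\frac{1}{K}$ and $\mathbb{E}[\ell_m]=0$. Your added feasibility check on the power budget is a harmless supplement that the paper handles in the surrounding text via the definition of $\zeta$.
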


\begin{proof}
By directly applying \emph{Theorem \ref{theorem1}} and substituting the parameters into (\ref{th1_1}), the proof is completed.
$\hfill\blacksquare$
\end{proof}

\subsubsection{Proposed Transmission Scheme \Rmnum{2}}
While Scheme \Rmnum{1} eliminates the heterogeneity of $\{\beta_k\}_{k\in \mathcal{K}}$ through power control, the power efficiency can be severely sacrificed. 
An alternative approach is to adjust the weight factors $\{w_k\}_{k\in \mathcal{K}}$.
Assume that each device $k$ utilizes its maximum power $P_k$ for transmission. For the sake of tractability, we substitute $\Vert\mathbf{g}_{t,k}\Vert$ with its uppper bound $G$ and get $p_k=\frac{P_k}{G^2}$ for all $k\in \mathcal{K}$.% Note that this setting not only ensures the maximum power constraint, but also has no impact on the final analytical results., i.e., $p_k\Vert\mathbf{g}_{t,k}\Vert^2\!=\!P_k$ 

The configuration of RIS phase shifts of Scheme~\Rmnum{2} is designed in the following proposition.

\begin{proposition} \label{proposition3}
By setting $\sqrt{p_k}=\frac{\sqrt{P_k}}{G}$, the denoising factor $\lambda=\frac{\pi NK}{4G\sqrt{\sum_{k=1}^{K}w_k^{2}}}$, and $w_k=\frac{1}{\beta_k\sqrt{P_k}}$, i.e., the RIS phase shifts
    \begin{align}\label{RISphase2}
    \theta_n\!=\!-\angle h_{p,n}^\ast\!+\!\angle\sum_{k=1}^{K}{\frac{1}{\beta_k\sqrt{P_k}}h_{r,k,n}^\ast},\enspace \forall n=1,\cdots,N,
    \end{align}
the gradient estimation in (\ref{ghata}) at the PS is unbiased.
\end{proposition}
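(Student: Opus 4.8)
The plan is to verify Proposition~\ref{proposition3} by directly invoking \emph{Theorem~\ref{theorem1}} and checking that the prescribed parameters drive the first-order moments in (\ref{th1_1}) to their target values in (\ref{Unbiasedness}). Since \emph{Theorem~\ref{theorem1}} already establishes $\mathbb{E}[\ell_m]=0$ whenever the RIS phase shifts obey (\ref{eq10}), and (\ref{RISphase2}) is precisely the instance of (\ref{eq10}) with $w_k=\frac{1}{\beta_k\sqrt{P_k}}$, the interference-unbiasedness $\mathbb{E}[\ell_m]=0$ holds immediately. The remaining task is to confirm the aggregation condition $\mathbb{E}[\ell_k]=\frac{1}{K}$ for every $k\in\mathcal{K}$.

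First I would substitute $\sqrt{p_k}=\frac{\sqrt{P_k}}{G}$ and $w_k=\frac{1}{\beta_k\sqrt{P_k}}$ into the expression for $\mathbb{E}[\ell_k]$ given in (\ref{th1_1}), namely $\mathbb{E}[\ell_k]=\frac{\pi N\beta_k\sqrt{p_k}\,w_k}{4\lambda\sqrt{\sum_{i=1}^{K}w_i^2}}$. The key cancellation to observe is that the product $\beta_k\sqrt{p_k}\,w_k=\beta_k\cdot\frac{\sqrt{P_k}}{G}\cdot\frac{1}{\beta_k\sqrt{P_k}}=\frac{1}{G}$ is independent of $k$; this is exactly how adjusting the weight factor $w_k$ offsets the large-scale heterogeneity of $\{\beta_k\}_{k\in\mathcal{K}}$, mirroring the role that power control plays in Scheme~\Rmnum{1}. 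Hence $\mathbb{E}[\ell_k]=\frac{\pi N}{4\lambda G\sqrt{\sum_{i=1}^{K}w_i^2}}$, which is now identical for all $k$.

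Next I would insert the denoising factor $\lambda=\frac{\pi NK}{4G\sqrt{\sum_{k=1}^{K}w_k^2}}$ and verify that the factor $\sqrt{\sum_{i=1}^{K}w_i^2}$ appearing in the numerator of $\lambda$ cancels against the $\sqrt{\sum_{i=1}^{K}w_i^2}$ in the denominator of $\mathbb{E}[\ell_k]$, while the constants $\pi$, $N$, $4$, and $G$ likewise cancel, leaving $\mathbb{E}[\ell_k]=\frac{1}{K}$. This closes the argument, since both conditions in (\ref{Unbiasedness}) are then met and the gradient estimation in (\ref{ghata}) is unbiased. I would note in passing that the power constraint is respected by construction, since substituting $\Vert\mathbf{g}_{t,k}\Vert$ with its upper bound $G$ yields $p_k\Vert\mathbf{g}_{t,k}\Vert^2\le\frac{P_k}{G^2}\cdot G^2=P_k$.

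The proof is essentially a parameter substitution with no genuine obstacle, so the only point requiring minor care is the bookkeeping of the $\sqrt{\sum_{i=1}^{K}w_i^2}$ term: one must confirm that the same quantity defined through $w_k=\frac{1}{\beta_k\sqrt{P_k}}$ is used consistently in both $\lambda$ and the denominator of (\ref{th1_1}), so that the cancellation is exact rather than merely asymptotic. Unlike Scheme~\Rmnum{1}, here $w_k$ carries the large-scale information $\beta_k\sqrt{P_k}$ while $p_k$ is fixed at the maximum, so the verification hinges on this interplay between $w_k$ and $\lambda$ rather than on $p_k$; once that is observed, the result follows directly from \emph{Theorem~\ref{theorem1}}.
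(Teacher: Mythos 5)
Your proposal is correct and follows exactly the paper's own route: the paper proves Proposition~2 by directly applying \emph{Theorem~1} and substituting the prescribed $p_k$, $w_k$, and $\lambda$ into (\ref{th1_1}), which is precisely the cancellation $\beta_k\sqrt{p_k}\,w_k=\frac{1}{G}$ followed by the $\lambda$ substitution that you carry out explicitly. You have simply written out the arithmetic that the paper leaves implicit.
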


\begin{proof} By directly applying \emph{Theorem \ref{theorem1}} and substituting the parameters into (\ref{th1_1}), the proof is completed.
$\hfill\blacksquare$
\end{proof}

Both Scheme~\Rmnum{1} and Scheme \Rmnum{2} provide an unbiased gradient estimation. However, their impacts on the MSE of the gradient estimation may differ due to distinct power allocation strategies and RIS phase shift configurations. Thorough analyses are presented in the subsequential section.

% \vspace{-2.pt}
\section{Performance Analysis}
In this section, the MSE of the gradient estimation for the proposed schemes is accurately derived in closed form, which facilitates the convergence analysis for further evaluation.

% \vspace{-5.pt}
\subsection{MSE Analysis}
The MSE, defined in (\ref{MSE}), quantifies the AirComp performance for global model aggregation in AirFL. 
% By substituting (\ref{gradient}) and (\ref{ghata}) into (\ref{MSE}), it is easy to get~(\ref{MSEnew}), where ${\bar{h}}_k\triangleq\ell_k-\frac{1}{K}$ and ${\bar{h}}_m\triangleq\ell_m$.
Concerning an interferer $m\in \mathcal{M}$, we consider the worst case that it transmits signal at the maximum power $P_m$, i.e., $p_m = P_m$.
For the MSE analysis, we present the following theorem.

\begin{theorem}\label{theorem3}
The MSE for Scheme~\Rmnum{1} and Scheme \Rmnum{2}, denoted by ${\rm MSE}_1$ and ${\rm MSE}_2$, is calculated in (\ref{MSE1}) and (\ref{MSE2}), respectively, where $\alpha_k\triangleq \beta_k\sqrt{P_k}$ and $\alpha_m\triangleq \beta_m\sqrt{P_m}$.
\end{theorem}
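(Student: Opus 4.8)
The plan is to evaluate the MSE in (\ref{MSE}) directly from its definition. Substituting (\ref{ghata}) and (\ref{gradient}), the estimation error splits into a computation part, an interference part, and a noise part,
\[
\hat{\mathbf{g}}_t-\mathbf{g}_t=\sum_{k\in\mathcal{K}}\!\left(\ell_k-\tfrac{1}{K}\right)\!\mathbf{g}_{t,k}+\sum_{m\in\mathcal{M}}\ell_m\tilde{\mathbf{g}}_{t,m}+\bar{\mathbf{z}}_t .
\]
I would then take $\mathbb{E}[\|\cdot\|^2]$ and exploit the mutual independence of the fadings $\{\mathbf{h}_p,\mathbf{h}_{r,i}\}$, the local gradients, the interference, and the noise to kill the cross terms. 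The noise cross terms vanish since $\mathbf{z}_t$ is zero-mean and independent of all else. The computation--interference cross term vanishes because, conditioned on the target channels and $\mathbf{h}_p$ (hence on the RIS phase (\ref{eq10})), $\ell_m$ is linear in the zero-mean, independent interferer channel $\mathbf{h}_{r,m}$, so $\mathbb{E}[\ell_m\mid\{\mathbf{h}_{r,k}\},\mathbf{h}_p]=0$ while $\ell_k$ is measurable with respect to that conditioning. The same conditioning argument shows that the \emph{off-diagonal} interference terms $\mathbb{E}[\ell_m\ell_{m'}]$ with $m\ne m'$ also vanish, so the interference error collapses to $\sum_{m\in\mathcal{M}}\mathbb{V}[\ell_m]$ (using $\|\tilde{\mathbf{g}}_{t,m}\|=1$ and $\mathbb{E}[\ell_m]=0$ from \emph{Theorem \ref{theorem1}}).

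This leaves three pieces. The noise error is immediate: with $\bar{\mathbf{z}}_t=\Re\{\mathbf{z}_t\}/\lambda$ and $\mathbf{z}_t\sim\mathcal{CN}(\mathbf{0},\sigma^2\mathbf{I}_D)$, $\mathbb{E}[\|\bar{\mathbf{z}}_t\|^2]=\frac{D\sigma^2}{2\lambda^2}$. The interference error is $\sum_m\mathbb{V}[\ell_m]$. The computation error, using $\mathbb{E}[\ell_k]=1/K$, is
\[
\sum_{k,k'\in\mathcal{K}}\mathbb{E}\!\left[(\ell_k-\tfrac1K)(\ell_{k'}-\tfrac1K)\right]\mathbb{E}\!\left[\mathbf{g}_{t,k}^T\mathbf{g}_{t,k'}\right],
\]
whose diagonal is controlled by $\mathbb{V}[\ell_k]$ and whose off-diagonal requires $\mathrm{Cov}[\ell_k,\ell_{k'}]$; for the gradient factors I would invoke the norm bound $\|\mathbf{g}_{t,k}\|\le G$.

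The main obstacle is the exact, closed-form evaluation of the second-order moments $\mathbb{V}[\ell_k]$, $\mathbb{V}[\ell_m]$, and $\mathrm{Cov}[\ell_k,\ell_{k'}]$ under (\ref{eq10}). Writing $\mathbf{h}_p^H\mathbf{\Theta}\mathbf{h}_{r,i}=\sum_n|h_{p,n}|\,\mathrm{e}^{j\phi_n}h_{r,i,n}$ with $\phi_n\triangleq\angle\sum_k w_k h_{r,k,n}^\ast$, each moment becomes a double sum of $\mathbb{E}[|h_{p,n}||h_{p,n'}|]\,\mathbb{E}[\Re\{\mathrm{e}^{j\phi_n}h_{r,i,n}\}\Re\{\mathrm{e}^{j\phi_{n'}}h_{r,j,n'}\}]$. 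I would use the independence of the Rayleigh magnitudes $\{|h_{p,n}|\}$ (with $\mathbb{E}[|h_{p,n}|]=\tfrac{\sqrt\pi}{2}$, $\mathbb{E}[|h_{p,n}|^2]=1$) from the phase-aligned factors, and split into diagonal ($n=n'$) and off-diagonal ($n\ne n'$) contributions; the off-diagonal part reproduces the squared first moments already fixed by \emph{Theorem \ref{theorem1}}, while the diagonal part produces the residual $\mathcal{O}(1/N)$ fluctuations of \emph{Theorem \ref{theorem2}}. The delicate computation is the joint moment of $\Re\{\mathrm{e}^{j\phi_n}h_{r,i,n}\}$, since the common phase $\phi_n$ couples every device through $\sum_k w_k h_{r,k,n}^\ast$; I expect this to require conditioning on $\phi_n$ and evaluating the conditional mean and variance of the projection of $h_{r,i,n}$ onto the direction $\mathrm{e}^{-j\phi_n}$. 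Finally, I would substitute the scheme-specific parameters from \emph{Proposition \ref{proposition2}} (Scheme \Rmnum{1}: $\sqrt{p_k}=\beta_k^{-1}\zeta$, $w_k=1$) and \emph{Proposition \ref{proposition3}} (Scheme \Rmnum{2}: $\sqrt{p_k}=\sqrt{P_k}/G$, $w_k=(\beta_k\sqrt{P_k})^{-1}$), together with their denoising factors $\lambda$, and collect terms in $\alpha_k=\beta_k\sqrt{P_k}$ and $\alpha_m=\beta_m\sqrt{P_m}$ to reach (\ref{MSE1}) and (\ref{MSE2}).
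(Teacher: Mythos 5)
Your overall architecture matches the paper's: the error decomposition into computation, interference, and noise parts, the reformulation of the MSE as a quadratic form in $\bar h_k\triangleq\ell_k-\frac1K$ and $\bar h_m\triangleq\ell_m$, the vanishing of the noise, computation--interference, and interferer--interferer cross terms by conditioning on the target channels and $\mathbf{h}_p$ (so that $\ell_m$ is conditionally zero-mean), and the noise term $\frac{D\sigma^2}{2\lambda^2}$ are all exactly what Appendix~\ref{appD} does. However, the theorem's content is the \emph{exact closed-form constants} in (\ref{MSE1}) and (\ref{MSE2}), and these live entirely in the second-order moments $\mathbb{E}[u_k^2]$, $\mathbb{E}[u_k u_{k'}]$, and $\mathbb{E}[u_m^2]$ that you defer to ``conditioning on $\phi_n$ and evaluating the conditional mean and variance of the projection of $h_{r,i,n}$ onto $\mathrm{e}^{-j\phi_n}$.'' That step is the whole difficulty and your sketch does not resolve it: $\phi_n=\angle\sum_i w_i h_{r,i,n}^\ast$ is a function of the very variables $h_{r,i,n}$ you are projecting, so the conditional law of $h_{r,i,n}$ given $\phi_n$ is not the simple Gaussian projection your plan suggests. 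The paper instead writes $a=w_kh_{r,k,n}$, $b=\sum_{i\neq k}w_ih_{r,i,n}$ and reduces the diagonal terms algebraically: $(\Re\{\cdot\})^2=|\cdot|^2-(\Im\{\cdot\})^2$, then $\mathbb{E}[(\Im\{a\tfrac{a^\ast+b^\ast}{|a^\ast+b^\ast|}\})^2]=\tfrac12\mathbb{E}[\min\{|a|^2,|b|^2\}]$ via an explicit integral over the phase difference (Lemmas~\ref{lemma2}--\ref{lemma3} and \cite[Eq.~(3.644.4)]{table}) followed by the min-of-independent-exponentials identity (Lemma~\ref{lemma4}); the cross term $\mathbb{E}[u_ku_{k'}]$ uses the further identity $\mathbb{E}[\Re\{{\rm A}\}\Re\{{\rm B}\}]=-\mathbb{E}[\Im\{{\rm A}\}\Im\{{\rm B}\}]$ (valid since $\Re\{\mathbb{E}[{\rm A}^\ast{\rm B}]\}=0$) together with a polarization argument. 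Without some equivalent of these computations you cannot produce the factors $\frac{8(K+1)-\pi^2}{\pi^2NK^2}$, $\frac{8-\pi^2}{\pi^2NK^2}$, and $\frac{N}{2}$ that populate (\ref{MSE1})--(\ref{MSE2}), so the proof is incomplete at its core.

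A second, smaller issue: you propose to ``invoke the norm bound $\|\mathbf{g}_{t,k}\|\le G$'' for the gradient factors in the computation error. The theorem asserts an exact evaluation, and the paper keeps $\mathbb{E}[\Vert\mathbf{g}_{t,k}\Vert^2]$ and $\mathbb{E}[\mathbf{g}_{t,k}^T\mathbf{g}_{t,k'}]$ symbolic in the final expressions; bounding them by $G^2$ would turn the claimed equality into an inequality. The constant $G$ enters only through the power normalization and the denoising factor $\lambda$ (hence only in the interference and noise terms), not through the computation term.
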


\begin{proof} See Appendix~\ref{appD}.$\hfill\blacksquare$
\end{proof}

\begin{figure*}[t] 
\begin{align}
{\rm MSE}_1\!=&\!\underbrace{\frac{8\left(K\!+\!1\right)\!-\!\pi^2}{\pi^2NK^2}\sum_{k\in\mathcal{K}}\mathbb{E}\left[\Vert\mathbf{g}_{t,k}\Vert^2\right]\!+\!\frac{8-\pi^2}{\pi^2NK^2}\sum_{k\in\mathcal{K}}\sum_{k^\prime\neq k}\mathbb{E}\!\left[\mathbf{g}_{t,k}^T\mathbf{g}_{t,k^\prime}\right]}_{\rm{computation},\ \Delta_{1,1}}\!+\!\underbrace{\sum_{m\in\mathcal{M}}\frac{8G^2\alpha_m^2}{\pi^2NK\min\limits_{k\in\mathcal{K}}{\alpha_k^2}}}_{\rm{interference},\ \Delta_{1,2}}+\underbrace{\frac{8G^2\sigma^2D}{\pi^2N^2K\min\limits_{k\in\mathcal{K}}{\alpha_k^2}}}_{\rm{noise},\ \Delta_{1,3}}.
\label{MSE1}
\end{align}
\hrulefill
\end{figure*}

\begin{figure*}[t] 
\begin{align}
{\rm MSE}_2\!=\!\underbrace{\sum_{k\in\mathcal{K}}\!\frac{8\!\left(\frac{\sum_{i\in\!\mathcal{K}}\alpha_i^{-2}}{\alpha_k^{-2}}\!+\!1\right)\!-\!\pi^2}{\pi^2NK^2}\mathbb{E}\!\left[\Vert\mathbf{g}_{t,k}\Vert^2\right]\!+\!\frac{8-\pi^2}{\pi^2NK^2}\sum_{k\in\mathcal{K}}\sum_{k^\prime\neq k}\mathbb{E}\!\left[\mathbf{g}_{t,k}^T\mathbf{g}_{t,k^\prime}\right]}_{\rm{computation},\ \Delta_{2,1}}\!+\!\underbrace{\sum_{m\in\mathcal{M}}\frac{8G^2\sum_{i\in\mathcal{K}}\alpha_i^{-2}}{\pi^2NK^2\alpha_m^{-2}}}_{\rm{interference},\ \Delta_{2,2}}\!+\!\underbrace{\frac{8G^2\sigma^2D\sum_{i\in\mathcal{K}}\alpha_i^{-2}}{\pi^2N^2K^2}}_{\rm{noise},\ \Delta_{2,3}}.
\label{MSE2}
\end{align}
\hrulefill
\end{figure*}

To facilitate the analysis, we divide the derived MSE into three parts, i.e., computation errors $\Delta_{1,1}$ and $\Delta_{2,1}$ (self- and cross-correlation terms of local gradients), interference errors $\Delta_{1,2}$ and $\Delta_{2,2}$ (self-correlation terms of interference signals), and noise errors $\Delta_{1,3}$ and $\Delta_{2,3}$ (equivalent noise power). 
In this way, we gain some insights~into the effectiveness of the proposed schemes in gradient computation, interference resilience, and noise suppression. 

\vspace{-5.pt}
\begin{remark}[Impact of the number of RIS elements]
    Both the computation and interference errors diminish by an order of $\mathcal{O}\left(\frac{1}{N}\right)$. Concurrently, the noise error showcases a decrease of order $\mathcal{O}\left(\frac{1}{N^2}\right)$, which is not as dominant a factor in determining the MSE. Furthermore, the MSE tends to zero as $N$ increases, which implies fast convergence.
\end{remark}

% \vspace{-12.pt}
\begin{remark}[Impact of SNR]
    With increasing SNR $\frac{P_k}{\sigma^2}$, the impact of interference and noise diminishes, which is consistent with established results in pure communication scenarios. However, the idea of eliminating interference error by increasing only the useful signal power $P_k$ is not cost-effective, given that the interference power $P_m$ typically remains significant. Furthermore, the computational error does not decrease when $P_k$ grows, leading to the MSE converging to a non-zero constant rather than approaching zero. Consequently, enhancing transmit power is less effective than increasing the number of RIS reflecting elements for improving the MSE.
\end{remark}

% \vspace{-5.pt}
In addition, we undertake a comparative evaluation to identify the preferable applicable scenarios for each scheme.

\emph{Observation 1:} 
In terms of gradient computation, empirical data from simulations, assuming common distributions for $\mathbb{E}\left[\Vert \mathbf{g}_{t,k}\Vert^2 \right]$, suggests that Scheme~\Rmnum{1} outperforms Scheme~\Rmnum{2} in terms of computational performance with high probability. This observation is further supported by the simulation results in Section~\Rmnum{6}. Especially, for IID local datasets where $\mathbb{E}\left[\Vert \mathbf{g}_{t,k}\Vert^2 \right]$ is consistent for all $k$, the computation superiority of Scheme~\Rmnum{1} is rigorously proven by leveraging the Cauchy-Schwarz inequality. This result is attributed to their distinctions in handling the large-scale coefficient $\{\beta_k\}_{k\in \mathcal{K}}$. Specifically, Scheme~\Rmnum{1} directly eliminates $\beta_k$ via power control at the transmitter, fundamentally addressing the issue of large-scale heterogeneity and resulting in improved gradient computation. In contrast, Scheme II enables full power transmission, offering better interference suppression, but it only addresses large-scale heterogeneity at the first-order moment through the settings of $w_k$. This approach does not eliminate heterogeneity at the second-order moment, as shown in $\Delta_{2,1}$. This incomplete strategy for handling $\beta_k$ affects the accuracy and stability of computational performance, leading to higher MSE in certain circumstances.

\emph{Observation 2:} 
In terms of interference and noise suppression, Scheme \Rmnum{2} always achieves better performance than Scheme \Rmnum{1} due to the fact that $\frac{\sum_{i\in\mathcal{K}} \alpha_i^{-2}}{K}\leq \max_{k\in\mathcal{K}}\alpha_k^{-2}$.
This is owing to Scheme~\Rmnum{2}'s strategy that target devices employ full power transmission, which consequently results in more effective suppression of interference and noise than Scheme~\Rmnum{1}.

Note that further optimization of $w_k$ and $p_k$ based on the statistical distribution of $\mathbf{g}_{t,k}$ could strike a more effective trade-off between gradient computation and interference suppression, potentially enhancing overall MSE performance beyond our current schemes. However, this approach relies on additional approximations of gradient statistics, which requires further investigation.

In summary, we conclude that \emph{Scheme~\Rmnum{1} excels in the gradient computation, making it more suitable for \textbf{computation-dominant systems}, while Scheme~\Rmnum{2} focuses more on combating interference and noise, thus more suitable for \textbf{interference-dominant systems}}.

% \vspace{-12.pt}
\subsection{Convergence Analysis of AirFL}
To begin with, we need some common assumptions on loss functions, which have been widely used \cite{4,5,wireless1,adapt}.

\emph{Assumption 1}: The local loss functions $F_k(\cdot)$ are differentiable and have $L$-Lipschitz gradients, which follows
\begin{align}
F_k(\mathbf{w})\leq  F_k(\mathbf{v}) +\nabla F_k(\mathbf{v})^T (\mathbf{w}-\mathbf{v})+\frac{L}{2}\Vert \mathbf{w}-\mathbf{v}\Vert^2.
\end{align}

\emph{Assumption 2}: The stochastic gradient is unbiased and variance-bounded, i.e.,
$\mathbb{E}\!\left [ \mathbf{g}_{t,k} \right]\!=\!\nabla F_k (\mathbf{w}_t)$ and $\mathbb{V}\left [ \mathbf{g}_{t,k} \right] \leq \chi^2$.

\emph{Assumption 3}: The gradient dissimilarity between the local and global gradients is bounded by a finite value $\xi$, i.e., $\left \Vert \nabla F_k(\mathbf{w})\right \Vert\leq \xi \left \Vert \nabla F(\mathbf{w})\right \Vert$.

It is worth noting that $\xi$ increases with the level of data heterogeneity and $\xi=1$ corresponds to the ideal case with IID local datasets \cite{adapt}. Based on the above assumptions, we evaluate the FL convergence under the proposed RIS-aided robust aggregation schemes in the following theorem.

\begin{theorem}\label{theorem4}
    Suppose the learning rate $\eta_t=\frac{1}{\varpi_u\sqrt{T}}$. The convergence of AirFL at the $T$-th round is bounded by
    \begin{align}\label{th4-1}
        \frac{1}{T}\!\sum_{t=0}^{T-1} \mathbb{E}\!\left[\left\Vert \nabla F(\mathbf{w}_t) \right \Vert^2 \right]\! \leq \!\frac{2\varpi_{u}}{\sqrt{T}}\!\left(\!F(\mathbf{w}_0)\!-\!\mathbb{E}\!\left[ F(\mathbf{w}^*)\right] \!+\!\frac{\varepsilon_u}{2\varpi_u^2} \!\right),
    \end{align}
    where $u\in\{\mathrm{I},\mathrm{II}\}$, the scaling factor  $\varpi_u$, and bias term $\varepsilon_u$ for Scheme $u$ are respectively given by
    \begin{align}
        \varpi_{\mathrm{I}}\!&=\!\left(\frac{(16\!-\!\pi^2)\xi^2}{\pi^2N}+1\right)L,\nonumber\\
        \varepsilon_\mathrm{I}\!&=\left(\!\frac{8(K\!+\!1)\!+\!\pi^2(N\!-\!1)}{\pi^2 NK}\chi^2+\!\Delta_{1,2}\!+\!\Delta_{1,3}\right)L,\nonumber \\
        \varpi_{\mathrm{II}}\!&=\!\left(\frac{\frac{8}{K^2}\sum_{k\in\mathcal{K}}\alpha_k^{2}\sum_{i\in\mathcal{K}}\alpha_i^{-2}+8-\pi^2}{\pi^2 N}\xi^2 +1\right)L,\nonumber\\
        \varepsilon_{\mathrm{II}}\!&=\!\!\left(\!\frac{\frac{8}{K}\!\sum_{k\in\!\mathcal{K}}\!\alpha_k^{2}\!\sum_{i\in\!\mathcal{K}}\alpha_i^{-2}\!\!+\!\!8\!\!+\!\!\pi^2\!(\!N\!-\!1)}{\pi^2 NK}\!\chi^2\!+\!\Delta_{2,2}\!+\!\Delta_{2,3}\!\right)\!\!L.
    \end{align}
\end{theorem}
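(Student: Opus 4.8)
The plan is to establish the convergence bound in \eqref{th4-1} via the standard descent-lemma route for smooth nonconvex optimization, adapted to the imperfect gradient estimate $\hat{\mathbf{g}}_t=\mathbf{g}_t+\bm{\varepsilon}_t$. First I would invoke \emph{Assumption~1} ($L$-Lipschitz gradients) applied to the global loss $F$, substituting the update rule $\mathbf{w}_{t+1}=\mathbf{w}_t-\eta_t\hat{\mathbf{g}}_t$ to obtain the one-step descent inequality
\begin{align}
\mathbb{E}\!\left[F(\mathbf{w}_{t+1})\right]\!\leq\! \mathbb{E}\!\left[F(\mathbf{w}_t)\right]\!-\!\eta_t\mathbb{E}\!\left[\nabla F(\mathbf{w}_t)^T\hat{\mathbf{g}}_t\right]\!+\!\frac{L\eta_t^2}{2}\mathbb{E}\!\left[\Vert\hat{\mathbf{g}}_t\Vert^2\right].
\end{align}
The crucial simplification comes from the \textbf{unbiasedness} established in \emph{Propositions~\ref{proposition2}} and \emph{\ref{proposition3}}, i.e.\ $\mathbf{e}_1=\mathbb{E}[\bm{\varepsilon}_t]=\mathbf{0}$; combined with \emph{Assumption~2}, this yields $\mathbb{E}[\hat{\mathbf{g}}_t]=\nabla F(\mathbf{w}_t)$, so the inner-product term collapses to $-\eta_t\Vert\nabla F(\mathbf{w}_t)\Vert^2$ after taking conditional expectation.

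Next I would decompose the second-order term as $\mathbb{E}[\Vert\hat{\mathbf{g}}_t\Vert^2]=\Vert\nabla F(\mathbf{w}_t)\Vert^2+\mathbb{E}[\Vert\hat{\mathbf{g}}_t-\nabla F(\mathbf{w}_t)\Vert^2]$ and identify the variance term with the MSE quantity $e_2$ already computed in \emph{Theorem~\ref{theorem3}}. Here the plan is to split the MSE into a piece proportional to $\Vert\nabla F(\mathbf{w}_t)\Vert^2$ (the ``hardening'' multiplicative factor that will feed into $\varpi_u$) and a residual additive piece (which becomes $\varepsilon_u$). Concretely, the self-correlation computation term $\Delta_{u,1}$ contains $\mathbb{E}[\Vert\mathbf{g}_{t,k}\Vert^2]$, which via \emph{Assumption~2} splits into the bounded variance $\chi^2$ and the squared gradient norm $\Vert\nabla F_k(\mathbf{w}_t)\Vert^2$; applying \emph{Assumption~3} ($\Vert\nabla F_k\Vert\leq\xi\Vert\nabla F\Vert$) bounds the latter by $\xi^2\Vert\nabla F(\mathbf{w}_t)\Vert^2$. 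This is precisely the mechanism that produces the $\frac{(16-\pi^2)\xi^2}{\pi^2 N}$ and $\frac{8}{K^2}\sum\alpha_k^2\sum\alpha_i^{-2}\xi^2$ contributions inside $\varpi_{\mathrm{I}}$ and $\varpi_{\mathrm{II}}$, while the interference and noise terms $\Delta_{u,2},\Delta_{u,3}$ pass through untouched into $\varepsilon_u$ together with the $\chi^2$ variance residual.

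After this bookkeeping the descent inequality takes the form $\mathbb{E}[F(\mathbf{w}_{t+1})]\leq\mathbb{E}[F(\mathbf{w}_t)]-\eta_t(1-\tfrac{L\eta_t}{2}c_u)\Vert\nabla F(\mathbf{w}_t)\Vert^2+\frac{L\eta_t^2}{2}\varepsilon_u/L$ for an appropriate aggregate coefficient $c_u$; I would then choose $\eta_t=\frac{1}{\varpi_u\sqrt{T}}$ so that the coefficient of $\Vert\nabla F(\mathbf{w}_t)\Vert^2$ stays bounded below (the role of defining $\varpi_u$ as the $L$-scaled multiplicative factor is exactly to make $1-\tfrac{L\eta_t}{2}c_u\geq\tfrac{1}{2}$ or similar for large $T$), rearrange to isolate $\Vert\nabla F(\mathbf{w}_t)\Vert^2$, sum over $t=0,\ldots,T-1$, and telescope the loss differences into $F(\mathbf{w}_0)-\mathbb{E}[F(\mathbf{w}^*)]$ using $F(\mathbf{w}^*)\leq\mathbb{E}[F(\mathbf{w}_T)]$. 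Dividing by $T$ and substituting the fixed learning rate recovers the $\frac{2\varpi_u}{\sqrt{T}}$ prefactor and the additive $\frac{\varepsilon_u}{2\varpi_u^2}$ bias term.

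The main obstacle I anticipate is the bookkeeping in the second paragraph: correctly separating the multiplicative (gradient-norm-dependent) and additive (constant) portions of the MSE so that the $\xi^2$-weighted pieces land in $\varpi_u$ and the $\chi^2$, interference, and noise pieces land in $\varepsilon_u$, with all the $\frac{8(K+1)+\pi^2(N-1)}{\pi^2 NK}$-type constants matching exactly. This requires carefully tracking which terms of $\Delta_{u,1}$ depend on $\Vert\nabla F(\mathbf{w}_t)\Vert^2$ versus which reduce to the variance bound, and ensuring the cross-correlation term $\mathbb{E}[\mathbf{g}_{t,k}^T\mathbf{g}_{t,k'}]$ is handled consistently (it should reduce to $\nabla F_k^T\nabla F_{k'}$ under unbiasedness, contributing to the gradient-norm side). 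The rest is routine once the decomposition is fixed.
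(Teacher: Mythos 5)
Your proposal is correct and follows essentially the same route as the paper's proof: a descent-lemma step under Assumption~1, unbiasedness collapsing the inner-product term, the bound $\mathbb{E}[\Vert\mathbf{g}_{t,k}\Vert^2]\leq\xi^2\mathbb{E}[\Vert\nabla F(\mathbf{w}_t)\Vert^2]+\chi^2$ and $\mathbb{E}[\mathbf{g}_{t,k}^T\mathbf{g}_{t,k'}]\leq\xi^2\mathbb{E}[\Vert\nabla F(\mathbf{w}_t)\Vert^2]$ from Assumptions~2--3 to split the MSE into the multiplicative $\varpi_u$ piece and the additive $\varepsilon_u$ piece, then the $\eta_t=\frac{1}{\varpi_u\sqrt{T}}$ choice with telescoping. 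The only cosmetic difference is that the paper keeps the SGD sampling variance $\mathbb{E}[\Vert\mathbf{g}_t-\nabla F(\mathbf{w}_t)\Vert^2]\leq\chi^2/K$ as a separate term from the communication MSE rather than folding both into one deviation term, but this does not change the argument or the resulting constants.
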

\begin{proof}
    See Appendix \ref{appE}.$\hfill\blacksquare$
\end{proof}

According to the above theorem, we conclude the following.

\begin{remark}[Convergence Rate]
    With a given learning rate, the convergence rate in (\ref{th4-1}) of the proposed robust aggregation schemes is on the order of $\mathcal{O}(\frac{2\varpi_{u}}{\sqrt{T}})$. By invoking the Cauchy-Schwarz inequality, it is easily verified that $\varpi_\mathrm{I}\leq \varpi_{\mathrm{II}}$. Thus, it can be inferred that Scheme \Rmnum{1} consistently achieves faster convergence than Scheme \Rmnum{2}. This fast-convergent characteristic originates from a reduced computational error, as delineated in the MSE analysis. Moreover, the convergence rate of the suggested methodology is solely influenced by the data heterogeneity, $\xi$, and the number of RIS elements, $N$. It is impervious to interferers, underscoring significant advancement over prevailing approaches~\cite{22}.
\end{remark}

\begin{remark}[Limiting Performance]
     With increasing $N$, we note that the scaling factor  $\varpi_u\to L$ and bias factor $\varepsilon_u\to \frac{\chi^2}{K}$.
     This indicates that the AirFL system gradually approximates the ideal convergence without interference or additional noise, with its ultimate performance limited only by a SGD error,~$\chi^2$. Therefore, \emph{by deploying a large number of low-cost RIS reflecting elements and utilizing the proposed aggregation schemes, we pave the way to realize an asymptotically optimal FL algorithm over the air,  even with interference and low SNR conditions.}
\end{remark}

% \vspace{-15.pt}
\section{Numerical Results}
\begin{figure}[!t]
    \setlength{\abovecaptionskip}{0pt}
    \setlength{\belowcaptionskip}{0pt}
    \centering
    \includegraphics[width=3.5in]{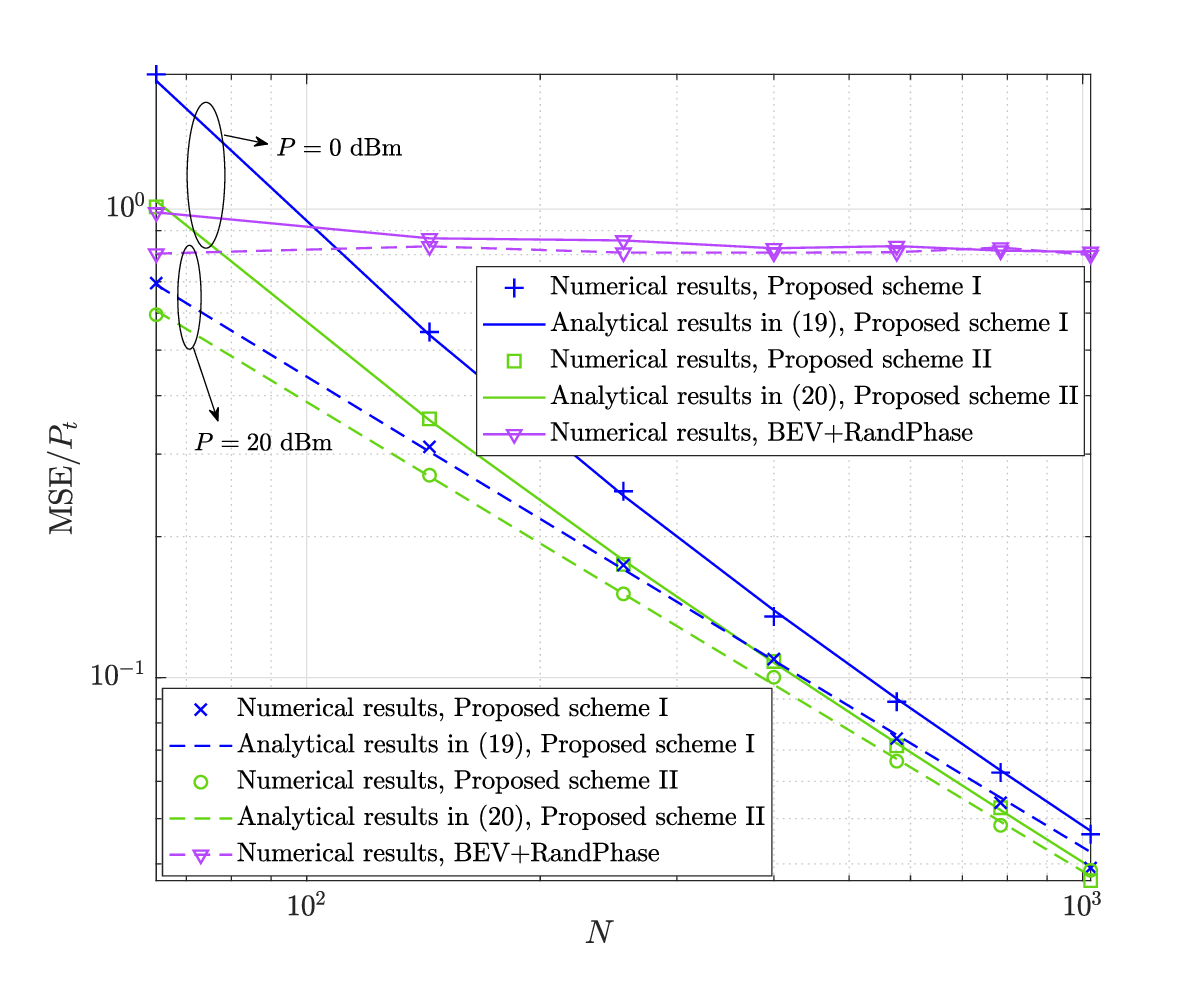}
    \caption{The MSE versus $N$ with $K=20$ and $M=10$.}
    % \vspace{-0.4cm}
    \label{fig2} \end{figure}
\begin{figure}[!t]
    \setlength{\abovecaptionskip}{0pt}
    \setlength{\belowcaptionskip}{0pt}
    \centering
    \includegraphics[width=3.5in]{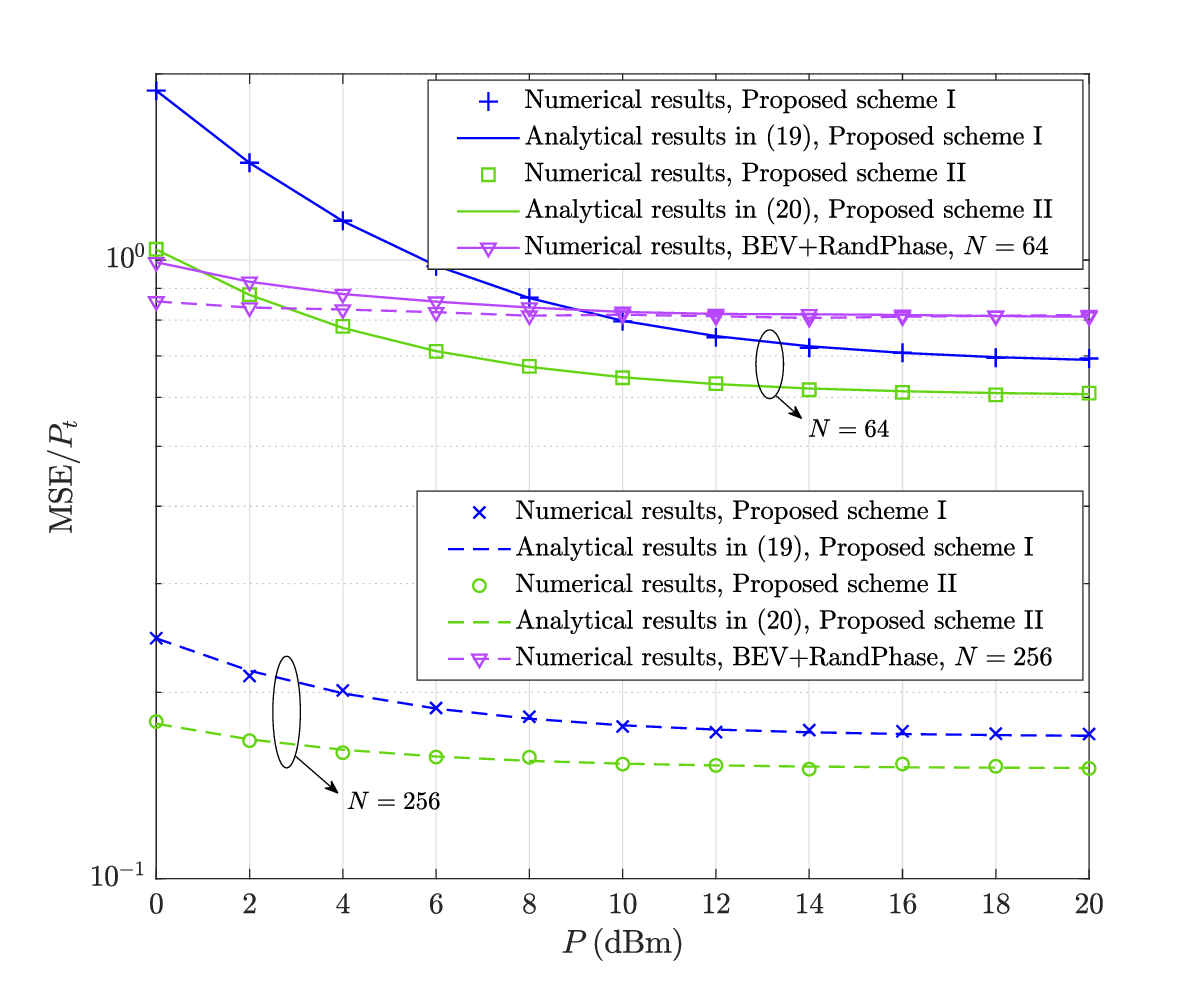}
    \caption{The MSE versus $P$ with $K=20$ and $M=10$.}
    % \vspace{-0.4cm}
    \label{fig3} \end{figure}
\begin{figure}[!t]
    \setlength{\abovecaptionskip}{0pt}
    \setlength{\belowcaptionskip}{0pt}
    \centering
    \includegraphics[width=3.5in]{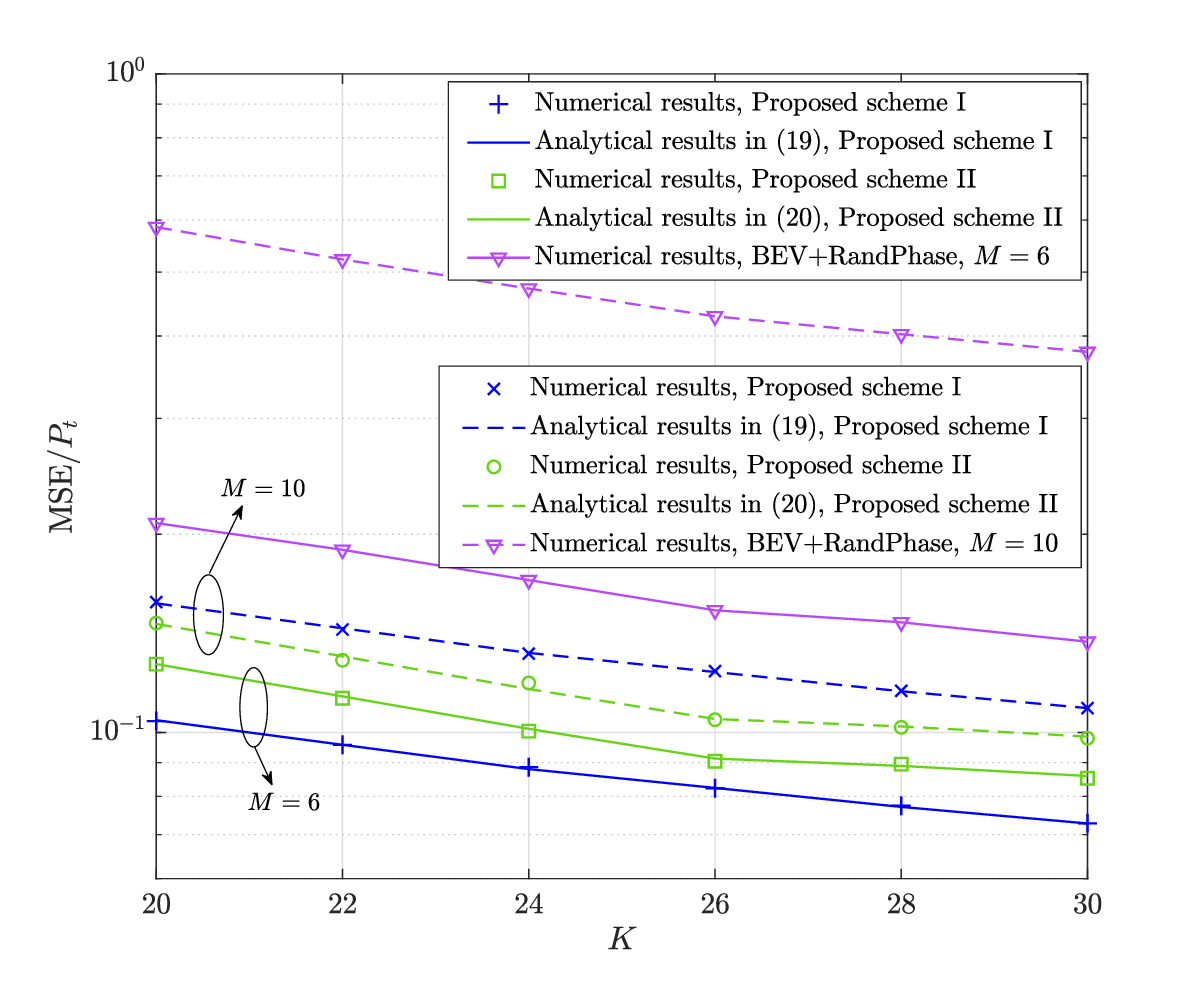}
    \caption{The MSE versus $K$ with $N=256$ and $P=0~{\rm dBm}$.}
    % \vspace{-0.4cm}
    \label{fig4} \end{figure}
\begin{figure}[!t]
    \setlength{\abovecaptionskip}{0pt}
    \setlength{\belowcaptionskip}{0pt}
    \centering
    \includegraphics[width=3.5in]{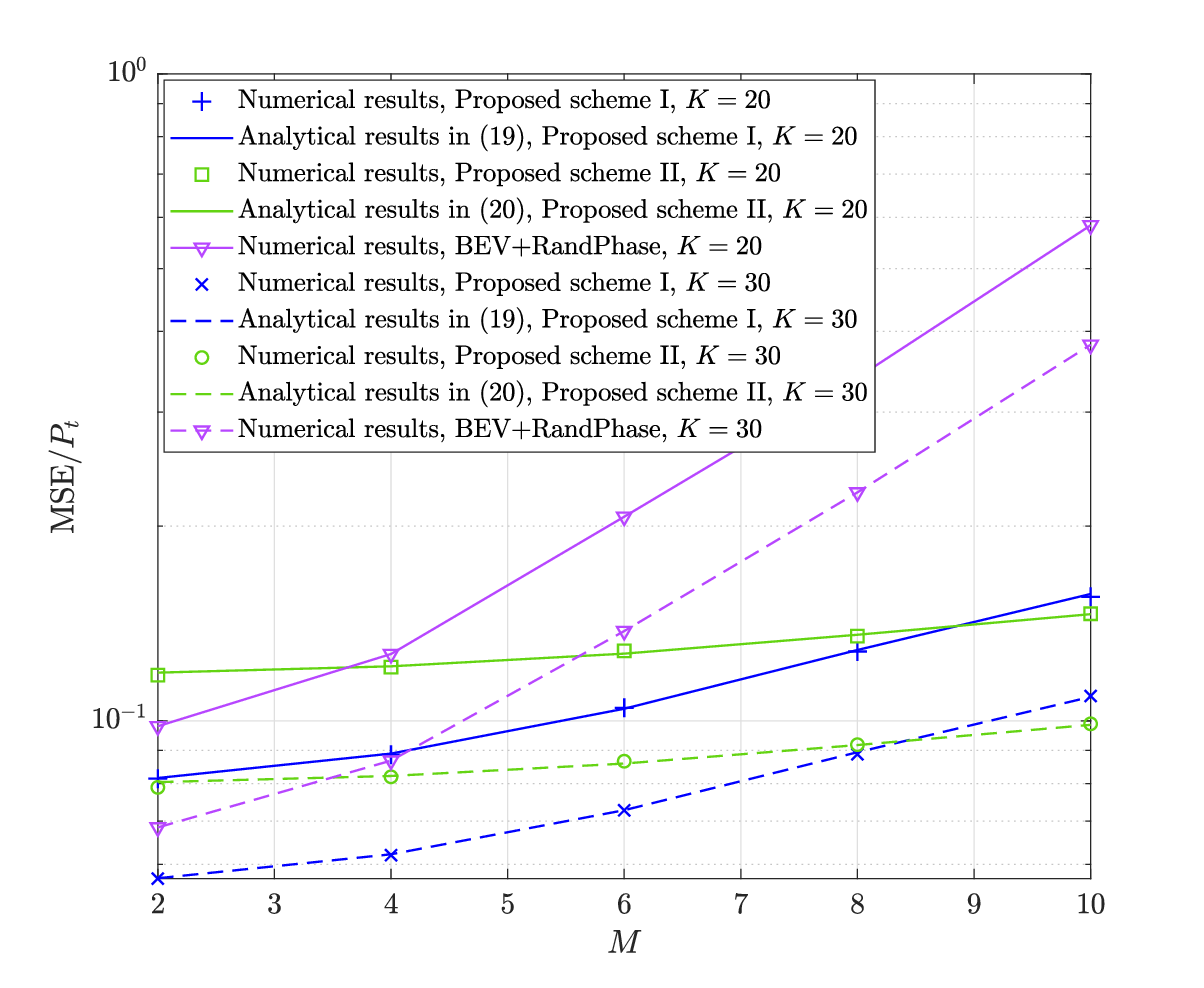}
    \caption{The MSE versus $M$ with $N=256$ and $P=0~{\rm dBm}$.}
    % \vspace{-0.4cm}
    \label{fig5} \end{figure}
In this section, numerical simulations are presented to validate the proposed schemes and analytical results.  We assume that the distance between the PS and RIS is $200~{\rm m}$, and all the devices are uniformly distributed within a disk of radius $300~{\rm m}$ centered at the RIS. The path loss exponent for all the links is $2.2$. The maximum transmit power of each device is the same, denoted as $P$. Unless otherwise specified, the other parameters are set as the number of target devices $K=20$, the number of interference devices $M=10$, the bandwidth $B=1$ MHz, the noise power spectral density $N_0=-140$ dBm/Hz, and the maximum transmit power $P=0$ dBm.

To evaluate the learning performance, we perform the FL tasks of image classification on the two popular datasets, i.e., MNIST and CIFAR-10. For the MNIST dataset, a multi-layer perceptron (MLP) with $D=23860$ parameters is trained via the AirFL. Regarding CIFAR-10, we adopt a convolutional neural network (CNN) with $D=62000$ parameters. It is noteworthy that all local datasets are non-IID, comprising at most two categories of labels. For interference devices, we assume that malicious zero-gradient attack are performed \cite{19}. The learning parameters are set as the batch size $b_k=50$ and the learning rate $\eta_t=0.005$.

For performance comparison, we mainly consider the following baseline schemes.
\begin{itemize}
    \item BEV+RandPhase: the target devices perform the BEV power control strategy in \cite{22} to combat interference and the RIS phases are randomly selected.
    \item BEV+RR: the target devices perform the BEV strategy in \cite{22} and RIS sequentially aligns to each target device's channel, similar to the Round Robin scheduling in \cite{RR}.
    \item BEV-RO: The random orthogonalization scheme with a multi-antenna receiver at the PS in \cite{36} is adopted for model aggregation and interference suppression. For the sake of fairness, compared to the RIS-assisted link, the direct channel in this scenario has a shorter distance but a larger path loss exponent, which is set as 3.5.
    \item BEV-minMSE: Each device adopts a BEV power control strategy and we jointly optimize the RIS phase shifts and the denoising factor at the receiver, aiming to minimize the MSE of gradient estimation, similar to \cite{minMSE}.
\end{itemize}

% \vspace{-13.pt}
\subsection{MSE Performance}
To provide a relative measure of error that can be compared across different datasets and models, we normalize the MSE by dividing (\ref{MSE}) by the power of the global gradient $\mathbf{g}_t$, which is defined as $P_t\!=\!\mathbb{E}\!\left[\Vert\mathbf{g}_t\Vert^2\right]$. This normalization process does not impact the analytical results in terms of $N$ and~$P$. We compare the results from Monte-Carlo simulations with (\ref{MSE1}) and (\ref{MSE2}) here. Fig.~\ref{fig2} depicts the normalized MSE versus the number of RIS elements, $N$. We see that both the analytical results match well with the numerical results. Furthermore, as predicted in \emph{Remark~1}, the MSE of our proposed schemes decreases linearly with $N$ on a log-log scale. This phenomenon becomes more obvious as $P$ increases, owing to the diminishing impact of noise error. Contrarily, the baseline scheme utilizing random RIS phase shifts fails to obtain any effective performance enhancements as $N$ increases, which demonstrates the importance of RIS phase shift configurations.

% \vspace{-12.pt}    
Fig.~\ref{fig3} shows the MSE versus the maximum transmit power $P$ for different values of $N$. It is evident that, compared to the marginal gains from increasing $P$, the MSE experiences more significant improvements as $N$ increases. Moreover, a performance ceiling is observed for the MSE as $P$ grows, which occurs because the MSE converges to a positive constant rather than approaching zero, as discussed in \emph{Remark~2}. Consequently, increasing $P$ proves to be less effective than increasing the number of RIS reflecting elements $N$ in reducing MSE.

Fig.~\ref{fig4} and Fig.~\ref{fig5} illustrate the MSE as a function of the number of target devices, $K$, and the number of interference devices, $M$, respectively. It is clearly shown that increasing $K$ and decreasing $M$ both improve the MSE performance. We can further observe that, for relatively large values of $M$, Scheme~\Rmnum{2} outperforms Scheme~\Rmnum{1} in terms of the MSE. Conversely, when $M$ is relatively small, Scheme~\Rmnum{1} exhibits superior MSE. This implies that Scheme~\Rmnum{1} achieves more efficient model aggregation, while Scheme~\Rmnum{2} demonstrates better performance in mitigating interference, validating the conclusions presented in \emph{Observations~1} and \emph{2}. In addition, it can be seen that compared with the baselines, our proposed schemes exhibit better advantages when interference is severe.

\begin{figure}[!t]
    \setlength{\abovecaptionskip}{0pt}
    \setlength{\belowcaptionskip}{0pt}
    \centering
    \includegraphics[width=3.5in]{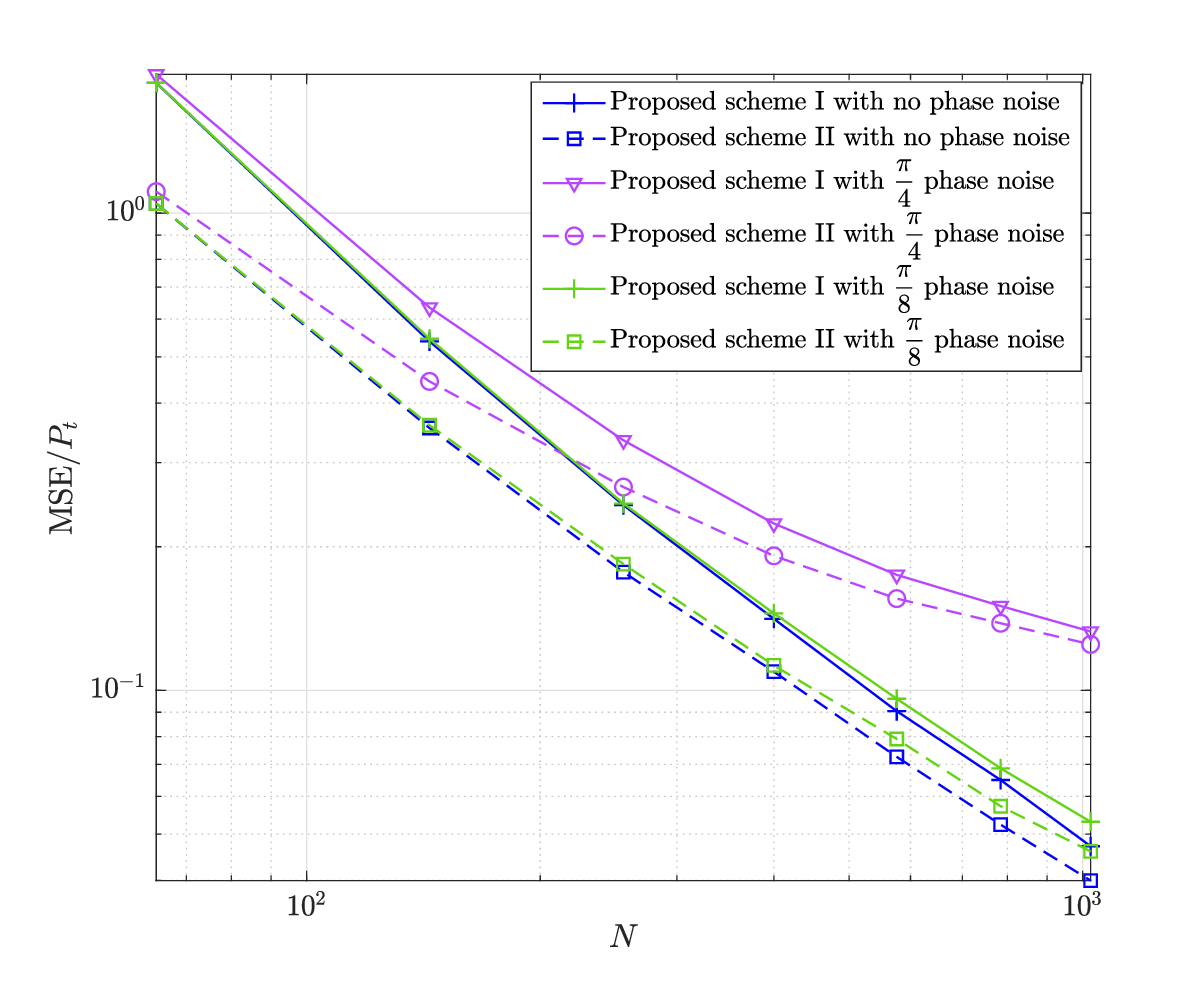}
    \caption{The impact of RIS phase noise on the MSE with $K=20$, $M=10$, and $P=0~{\rm dBm}$.}
    % \vspace{-0.4cm}
    \label{fig1-2} 
\end{figure}

Fig. \ref{fig1-2} illustrates the impact of RIS phase noise on the MSE when $K\!=\!20$, $M\!=\!10$ and $P\!=\!0~{\rm dBm}$. We~observe that for phase noise with a deviation of $\frac{\pi}{8}$, its impact on the MSE is negligible. This observation is consistent with the result shown in \cite{wshi}, confirming that RIS-assisted communications with 3-bit discrete phase shifts (capable of achieving up to~$\frac{\pi}{8}$ phase noise) asymptotically achieve the ideal performance of continuous phase shifts. This demonstrates the applicability of our proposed schemes under practical implementations.

% \vspace{-0.3cm}
\subsection{Convergence Performance}
\begin{figure}[!t]
    \centering
    \begin{minipage}[t]{1\linewidth}
        \centering
        \includegraphics[width=1\linewidth]{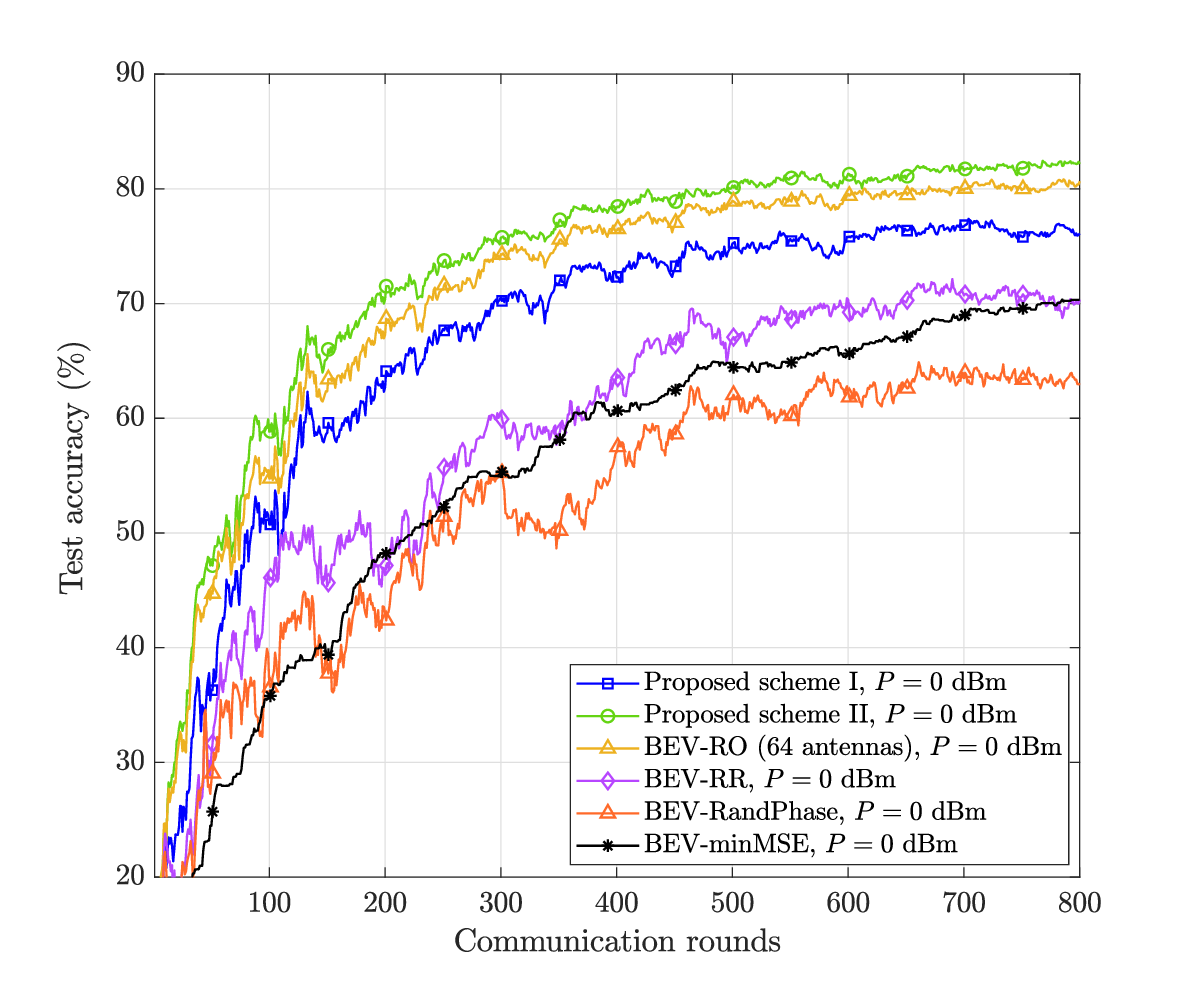}
    \end{minipage}
    \begin{minipage}[t]{1\linewidth}
        \centering
        \includegraphics[width=1\linewidth]{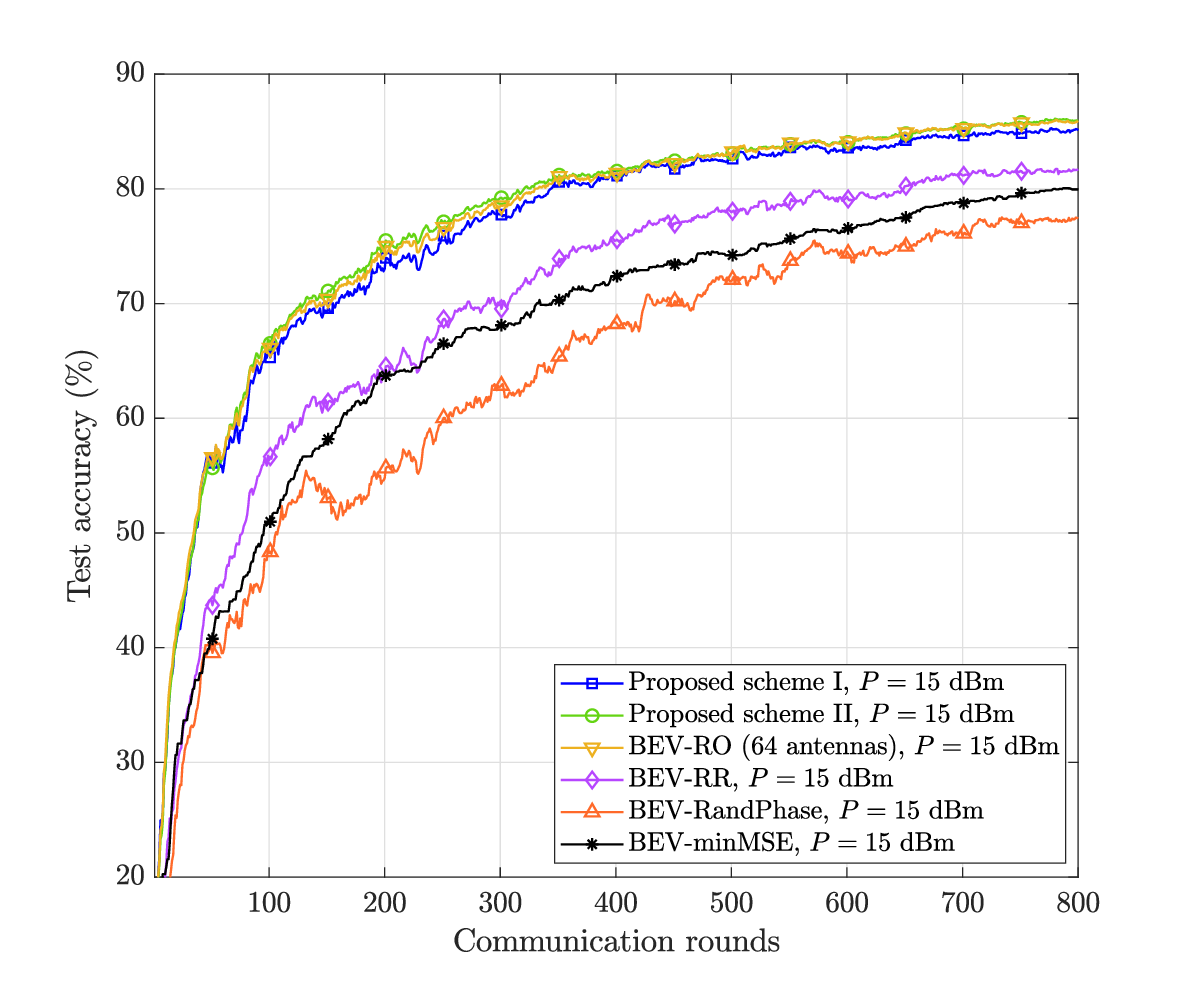}
    \end{minipage}
 \caption{Test accuracy versus communication rounds on MNIST datasets.}
 % \vspace{-0.4cm}
 \label{fig6} 
\end{figure}

\begin{figure}[!t]
    \centering
    \begin{minipage}[t]{1\linewidth}
        \centering
        \includegraphics[width=1\linewidth]{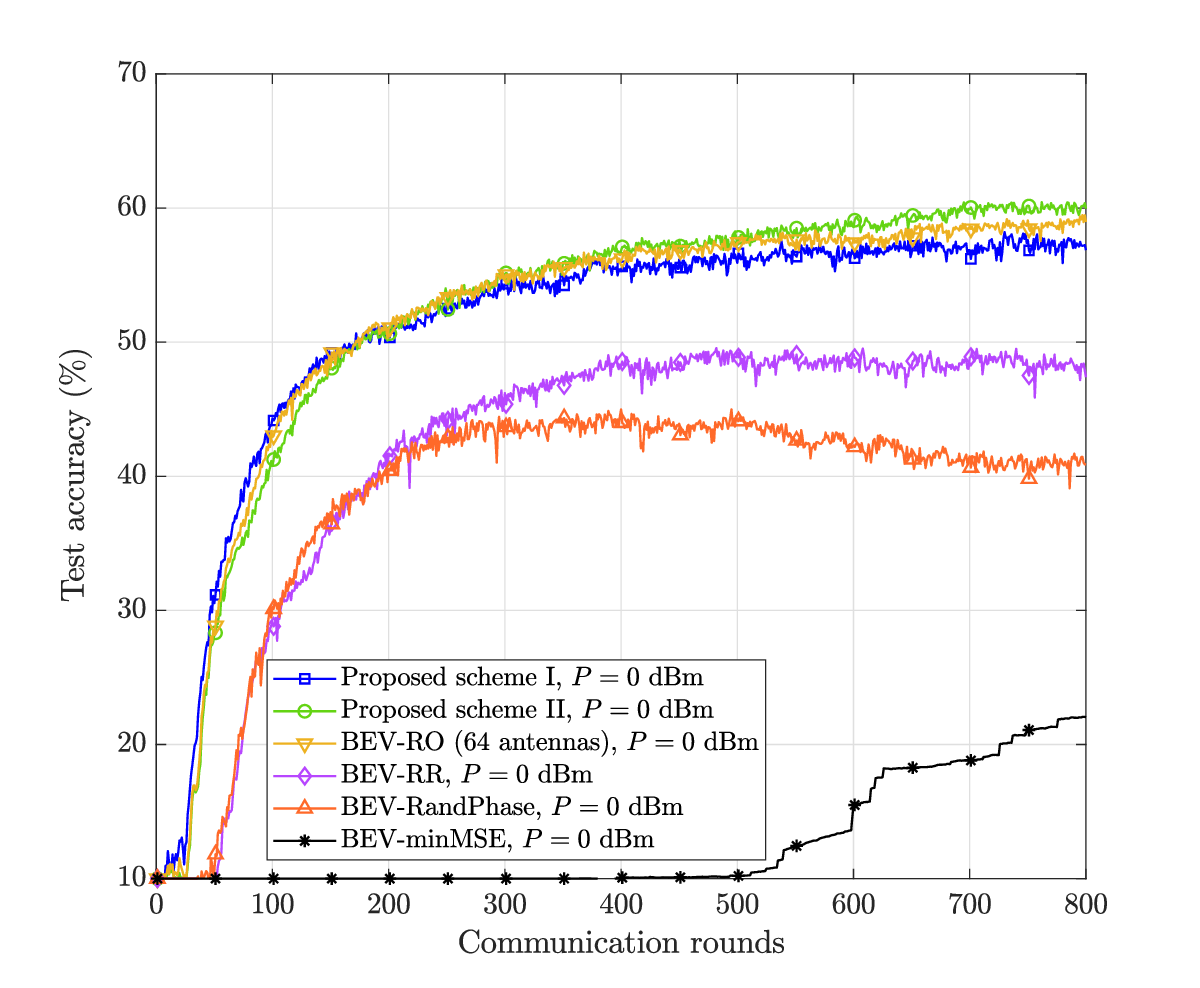}
    \end{minipage}
    \begin{minipage}[t]{1\linewidth}
        \centering
        \includegraphics[width=1\linewidth]{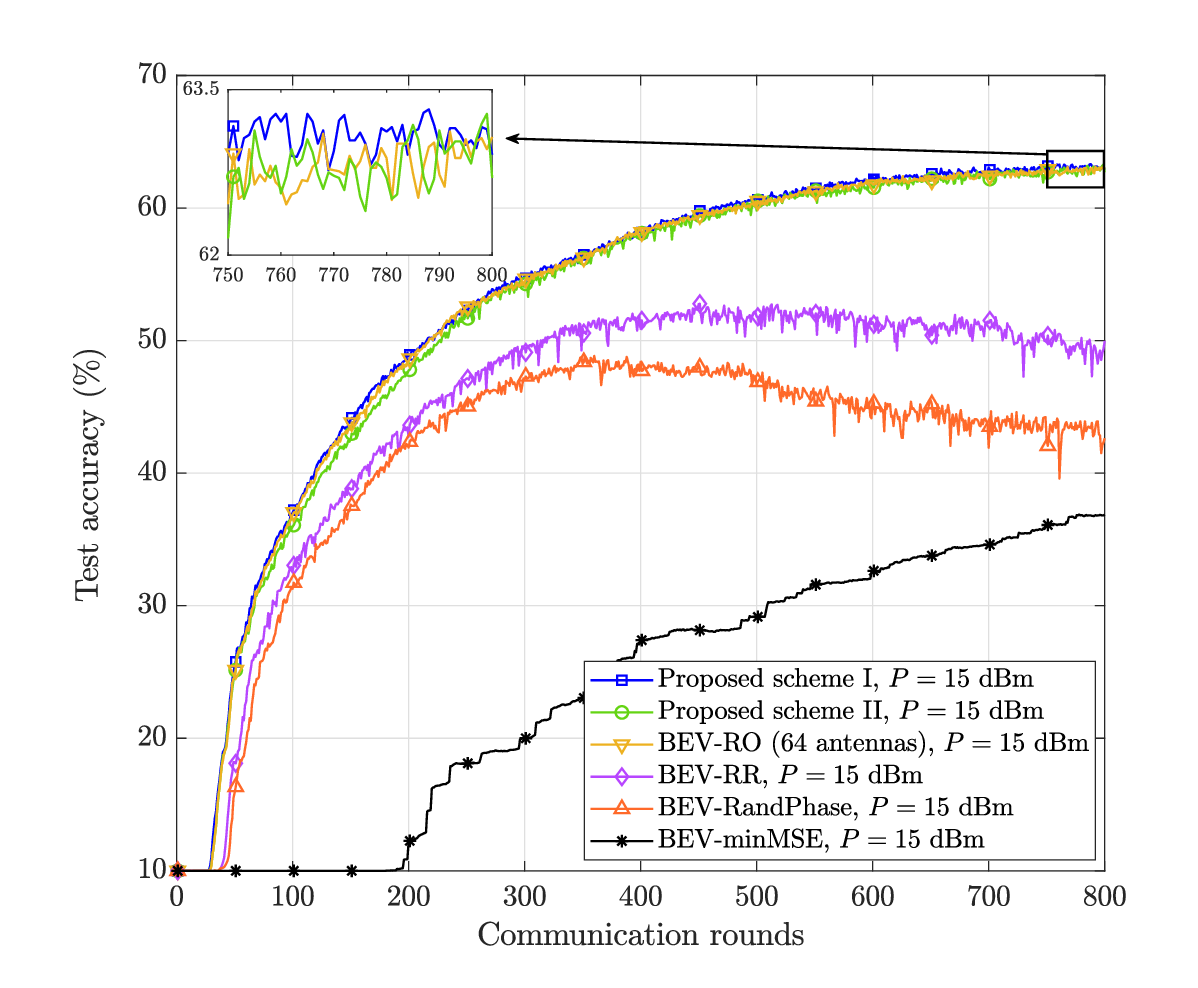}
    \end{minipage}
 \caption{Test accuracy versus communication rounds on CIFAR-10 datasets.}
 % \vspace{-0.4cm}
 \label{fig7} 
\end{figure}

Fig. \ref{fig6} illustrates the testing accuracy on the MNIST datasets for transmission powers of 0 and 15 dBm, respectively. Firstly, we observe that the proposed schemes perform comparably to the MIMO-based RO scheme. This can be attributed to the deployment of RIS, which significantly reduces the path loss exponent relative to direct links. Furthermore, both the proposed and RO schemes capitalize on the channel hardening and favorable propagation effects of large-scale antenna arrays, without requiring additional RF chains for signal processing. Hence, the primary function of large-scale arrays at the receiver to achieve diversity gain is effectively realized by RIS. However, to compensate for the double path fading effect introduced by RIS, a higher number of RIS elements are required compared to MIMO. Our results demonstrate that deploying 256 low-cost RIS elements surpasses the performance of a 64-antenna MIMO setup, highlighting the proposed RIS-based scheme's cost-efficiency advantage over MIMO.

Additionally, Fig. \ref{fig6} shows that the proposed schemes substantially outperform other baseline schemes aside from the MIMO-based RO scheme, confirming their effectiveness. Notably, at a low SNR regime, Scheme \Rmnum{2} exhibits a more pronounced advantage over Scheme \Rmnum{1}. This is because the FL convergence with low SNR level is primarily constrained by noise, underscoring Scheme \Rmnum{2}'s exceptional capability in suppressing interference and noise. As the SNR increases, the initial advantages of Scheme \Rmnum{2} gradually diminish, and both schemes tend to converge to a comparable performance level.

Fig. \ref{fig7} presents the performance of the proposed schemes on the CIFAR-10 datasets. Consistent with the observations on the MNIST datasets, our schemes exhibit notable superiority over baseline schemes, except for the MIMO-based RO scheme. Moreover, considering the inherently more intricate nature of the classification task associated with the CIFAR-10 datasets, it becomes evident that existing baseline schemes may encounter challenges in converging under conditions of strong interference. Aligned with our theoretical analysis, Scheme~\Rmnum{1} demonstrates faster convergence than Scheme~\Rmnum{2}, though this benefit comes at the expense of performance loss at low SNRs. Therefore, each scheme offers distinct advantages, with selection depending on the specific channel conditions and requirements.

Traditional baseline schemes, which focus on minimizing the MSE, perform poorly in interference-dominated scenarios. This is because, to counteract strong interference and noise, the receiver typically increases the denoising factor, $\lambda$, to maintain a low MSE. However, when interference becomes overwhelming, $\lambda$ must be raised significantly to minimize MSE, causing the estimated gradient to approach zero. As a result, the optimal MSE converges to $\mathbb{E}\left[\Vert \mathbf{g}_t \Vert^2 \right]$. In summary, the receiver, aiming to avoid excessive MSE, conservatively estimates a smaller gradient, slowing the gradient descent process. This also explains why such schemes exhibit more severe performance degradation, particularly at low SNR. In contrast, the proposed scheme prioritizes the unbiasedness of gradient estimation, ensuring superior convergence performance than traditional MSE-minimizing approaches.

% \vspace{-5.pt}
\section{Conclusion}
In this paper, we have proposed a novel concept of phase-manipulated favorable propagation and channel hardening via RIS to achieve robust gradient aggregation in the AirFL system with external interference. Specifically, two transmission schemes with different power allocation and RIS phase shift settings have been proposed to guarantee unbiased gradient estimation. Then, both MSE and FL convergence analyses were conducted to affirm the anti-interference capability of the proposed schemes. The obtained results quantify the impact of key parameters on the MSE and FL convergence and provide insightful guidelines for system design. 
Several simulations were provided to demonstrate the analytical results and validate the superior performance of the proposed schemes over existing baselines. 

There are several interesting research directions for future work. One direction is to integrate the proposed approach with advanced FL algorithms, such as FedProx \cite{FedProx} and personalized FL \cite{pAirFL}, to better address data heterogeneity in~scenarios with non-IID local datasets. Additionally, exploring novel types of RIS beyond the passive RIS studied in this paper, such as active RIS \cite{aRIS}, could offer further benefits by mitigating double path fading effect.

% \vspace{-5.pt}
\begin{appendices}
\section{Preliminary Lemmas}\label{appA}
Some useful lemmas are formally introduced as follows, which will be used in the later derivations.

\begin{lemma}\label{lemma1}
    If $x$ and $y$ are correlated Rayleigh RVs with mean $\frac{\sqrt\pi}{2}$ and variance $\frac{4-\pi}{4}$, then we get $\mathbb{E}\left[\frac{x^2}{y}\right]=\frac{\sqrt\pi}{2}(2-\rho)$, where the correlation coefficient $\rho=\mathbb{E}\left[x^2y^2\right]-1$.
\end{lemma}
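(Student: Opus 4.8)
The plan is to realize the correlated Rayleigh pair $(x,y)$ as the magnitudes of jointly circularly-symmetric complex Gaussian variables, which is the natural model in our fading setting. Concretely, I would write $x=|g_1|$ and $y=|g_2|$ where $g_1,g_2\sim\mathcal{CN}(0,1)$ are jointly Gaussian with complex correlation $\mu\triangleq\mathbb{E}[g_1 g_2^\ast]$; each marginal is then a Rayleigh RV with the stated mean $\frac{\sqrt\pi}{2}$ and variance $\frac{4-\pi}{4}$. The first step is to connect the given power-correlation $\rho$ to $\mu$. Invoking the fourth-moment identity for complex Gaussians, $\mathbb{E}[|g_1|^2|g_2|^2]=\mathbb{E}[|g_1|^2]\mathbb{E}[|g_2|^2]+|\mu|^2=1+|\mu|^2$, so that $\rho=\mathbb{E}[x^2y^2]-1=|\mu|^2$. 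This pins down the scalar appearing in the target expression.

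The second step is to apply the regression (conditional) decomposition $g_1=\mu g_2+\sqrt{1-|\mu|^2}\,w$, where $w\sim\mathcal{CN}(0,1)$ is independent of $g_2$; one checks immediately that this reproduces both the correct correlation $\mathbb{E}[g_1 g_2^\ast]=\mu$ and the unit variance $\mathbb{E}[|g_1|^2]=1$. Substituting and expanding the square yields $|g_1|^2=|\mu|^2|g_2|^2+(1-|\mu|^2)|w|^2+2\sqrt{1-|\mu|^2}\,\Re\{\mu g_2 w^\ast\}$. Dividing by $|g_2|=y$ and taking expectations splits $\mathbb{E}[x^2/y]$ into three contributions to be evaluated separately.

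The third step is to evaluate these terms using elementary Rayleigh moments, most conveniently through the representation $y^2\sim\mathrm{Exp}(1)$, which gives $\mathbb{E}[y]=\Gamma(3/2)=\frac{\sqrt\pi}{2}$ and $\mathbb{E}[1/y]=\Gamma(1/2)=\sqrt\pi$. The first term is $|\mu|^2\,\mathbb{E}[y]=|\mu|^2\frac{\sqrt\pi}{2}$; the second factors by independence of $w$ and $g_2$ into $(1-|\mu|^2)\mathbb{E}[|w|^2]\,\mathbb{E}[1/y]=(1-|\mu|^2)\sqrt\pi$; and the third (cross) term vanishes, since conditioning on $g_2$ and using $\mathbb{E}[w^\ast]=0$ gives $\mathbb{E}\!\left[\Re\{\mu g_2 w^\ast\}/|g_2|\mid g_2\right]=0$. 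Combining, $\mathbb{E}[x^2/y]=\frac{\sqrt\pi}{2}|\mu|^2+\sqrt\pi(1-|\mu|^2)=\frac{\sqrt\pi}{2}(2-|\mu|^2)$, and substituting $\rho=|\mu|^2$ delivers the claim.

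The main obstacle is structural rather than computational: the crux is recognizing that the power-correlation $\rho$ coincides with $|\mu|^2$ and choosing the regression decomposition so that the cross term collapses, leaving only the two one-dimensional moments $\mathbb{E}[y]$ and $\mathbb{E}[1/y]$. As a sanity check I would verify the two extremes against the formula: $\rho=0$ gives $\mathbb{E}[x^2/y]=\mathbb{E}[x^2]\mathbb{E}[1/y]=\sqrt\pi$, while the degenerate case $x=y$ (where $\rho=\mathbb{E}[y^4]-1=1$) gives $\mathbb{E}[y]=\frac{\sqrt\pi}{2}$, both of which match $\frac{\sqrt\pi}{2}(2-\rho)$ exactly and confirm that the result is exact and linear in $\rho$.
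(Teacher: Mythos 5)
Your proof is correct, and it takes a genuinely different route from the paper's. The paper works directly from the bivariate Rayleigh joint PDF of Tan and Beaulieu, writing $\mathbb{E}[x^2/y]$ as a double integral, collapsing one integration via a table identity involving the modified Bessel function $I_0$, substituting $t=x^2$, and finishing with another Gradshteyn--Ryzhik integral. You instead lift the problem to the underlying complex Gaussian level, writing $x=|g_1|$, $y=|g_2|$ with $g_1=\mu g_2+\sqrt{1-|\mu|^2}\,w$, identifying $\rho=|\mu|^2$ via the Gaussian fourth-moment identity, and reducing everything to $\mathbb{E}[y]=\Gamma(3/2)$ and $\mathbb{E}[1/y]=\Gamma(1/2)$ after the cross term vanishes. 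Your route avoids special functions entirely, makes the linearity in $\rho$ structurally transparent, and comes with endpoint sanity checks at $\rho=0$ and $\rho=1$; the paper's route has the virtue of working directly from the stated bivariate Rayleigh density without positing the Gaussian representation. Note that your argument implicitly restricts to Rayleigh pairs arising as magnitudes of jointly circularly-symmetric Gaussians, but this is exactly the model the lemma is applied to in Appendix B (where $|a|$, $|b|$, and $|a+b|$ are magnitudes of jointly Gaussian variables), and it is also the model underlying the bivariate Rayleigh PDF the paper cites, so this is a clarification of scope rather than a gap.
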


\begin{proof} 

By applying the joint probability density function (PDF) given in \cite[Eq.~(1)]{bivariate}, we obtain
\begin{align}\label{lemma1-E2}
\mathbb{E}\left[\frac{x^2}{y}\right]&=\int_{0}^{+\infty}\int_{0}^{+\infty}{\frac{4x^3{\rm e}^{-\frac{x^2+y^2}{1-\rho}}}{1-\rho}I_0\left\{\frac{2\sqrt\rho x y}{1-\rho}\right\}}{\rm{d}}x{\rm{d}}y\nonumber\\
&\mathop=^{\left(\rm a\right)}\int_{0}^{+\infty}{\frac{2\sqrt\pi}{\sqrt{1-\rho}}x^3{\rm e}^\frac{\left(\rho-2\right)x^2}{2\left(1-\rho\right)}I_0\left\{\frac{\rho x^2}{2\left(1-\rho\right)}\right\}}{\rm{d}}x\nonumber\\
&\mathop=^{\left(\rm b\right)}\int_{0}^{+\infty}{\frac{\sqrt\pi}{\sqrt{1-\rho}}t{\rm e}^\frac{\left(\rho-2\right)t}{2\left(1-\rho\right)}I_0\left\{\frac{\rho t}{2\left(1-\rho\right)}\right\}}{\rm{d}}t\nonumber\\
&\mathop=^{\left(\rm c\right)}\frac{\sqrt\pi}{2}(2-\rho),
\end{align}
where $I_\nu\left\{\cdot\right\}$ denotes the $\nu$-th-order modified Bessel function of the first kind \cite[Eq.~(8.406)]{table}, the coefficient $\rho\!=\!\frac{\mathbb{E}\left[x^2y^2\right]-\mathbb{E}\left[x^2\right]\mathbb{E}\left[y^2\right]}{\sqrt{\mathbb{V}\left[x^2\right]\mathbb{V}\left[y^2\right]}}$, $({\rm a})$ is obtained from \cite[Eq.~(6.618.4)]{table}, $({\rm b})$ follows by letting $t=x^2$, and $({\rm c})$ is calculated by using \cite[Eq.~(6.623.2)]{table}. By substituting $\mathbb{E}\left[x^2\right]=\mathbb{E}\left[y^2\right]=1$ and $\mathbb{V}\left[x^2\right]=\mathbb{V}\left[y^2\right]=1$, we complete the proof. $\hfill\blacksquare$
\end{proof}

\begin{lemma}\label{lemma4}
    If $x$ and $y$ are independent exponential RVs, $x\!\sim\!{\rm Exp}\!\left(\lambda_1\right)$, $y\!\sim\!{\rm Exp}\!\left(\lambda_2\right)$, then $z\!=\!\min\!\left\{x,y\right\}\!\sim\!{\rm Exp}\!\left(\lambda_1+\lambda_2\right)$.
\end{lemma}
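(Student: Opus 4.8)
The plan is to characterize the distribution of $z=\min\{x,y\}$ through its complementary cumulative distribution function (CCDF), or survival function, which for an exponential RV with rate $\lambda$ takes the clean multiplicative form $P(\cdot>t)={\rm e}^{-\lambda t}$ for $t\geq 0$. Working with the CCDF rather than the PDF or CDF is the natural choice here, since the minimum of two quantities exceeds a threshold precisely when both quantities do, which converts an awkward minimum into a simple intersection of events and makes independence directly exploitable.

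First I would write, for any $t\geq 0$, the key event identity $\{z>t\}=\{\min\{x,y\}>t\}=\{x>t\}\cap\{y>t\}$. Taking probabilities on both sides and invoking the independence of $x$ and $y$ yields $P(z>t)=P(x>t)\,P(y>t)$. Substituting the exponential survival functions $P(x>t)={\rm e}^{-\lambda_1 t}$ and $P(y>t)={\rm e}^{-\lambda_2 t}$ then gives $P(z>t)={\rm e}^{-\lambda_1 t}{\rm e}^{-\lambda_2 t}={\rm e}^{-(\lambda_1+\lambda_2)t}$.

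Finally I would observe that the resulting survival function $P(z>t)={\rm e}^{-(\lambda_1+\lambda_2)t}$ is exactly that of an exponential RV with rate parameter $\lambda_1+\lambda_2$, and since the CCDF uniquely determines a distribution on the nonnegative reals, we conclude $z\sim{\rm Exp}(\lambda_1+\lambda_2)$. There is no substantive analytical obstacle in this argument, as it is a classical memoryless-type computation; the only point demanding minor care is to restrict attention to $t\geq 0$, the support of the exponential law, so that the survival functions employed are valid and the conclusion correctly identifies a nonnegative distribution.
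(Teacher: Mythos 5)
Your proof is correct and complete: the identity $\{\min\{x,y\}>t\}=\{x>t\}\cap\{y>t\}$, independence, and the product of exponential survival functions give exactly ${\rm e}^{-(\lambda_1+\lambda_2)t}$, which uniquely identifies the ${\rm Exp}(\lambda_1+\lambda_2)$ law. The paper states this lemma without proof (treating it as a standard fact), and the survival-function argument you give is precisely the canonical justification one would supply.
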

\begin{lemma}\label{lemma2}
    If $x$ and $y$ are independent uniformly distributed RVs, $x\!\sim\!{\mathcal{U}}\!\left(0,2\pi\right)$, $y\!\sim\!{\mathcal{U}}\!\left(0,2\pi\right)$, then the PDF of $z\!=\!x\!+\!y$ is 
\begin{align}
f_z\left(z\right)=
\begin{cases}
\frac{z}{4\pi^2},~~~0\le z<2\pi,\\
\frac{4\pi-z}{4\pi^2},~2\pi\le z\le4\pi.
\end{cases}
\end{align}
\end{lemma}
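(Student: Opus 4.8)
The plan is to obtain the density of $z=x+y$ as the convolution of the two marginal densities, which is valid because $x$ and $y$ are independent. First I would write
\begin{align}
f_z(z)=\int_{-\infty}^{+\infty}f_x(\tau)f_y(z-\tau)\,{\rm d}\tau,
\end{align}
where $f_x(\tau)=f_y(\tau)=\frac{1}{2\pi}$ for $\tau\in[0,2\pi]$ and zero otherwise. Consequently the integrand equals the constant $\frac{1}{4\pi^2}$ precisely on the set where both indicator constraints are active, and vanishes elsewhere, so $f_z(z)$ is simply $\frac{1}{4\pi^2}$ times the length of that set.

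Next I would pin down the set explicitly. The first factor forces $\tau\in[0,2\pi]$, while the second forces $z-\tau\in[0,2\pi]$, i.e., $\tau\in[z-2\pi,z]$. Intersecting these gives the overlap interval $[\max\{0,z-2\pi\},\min\{2\pi,z\}]$, and since $z=x+y$ ranges over $[0,4\pi]$, only two regimes can occur. I would then split into cases: for $0\le z<2\pi$ the overlap is $[0,z]$ of length $z$, yielding $f_z(z)=\frac{z}{4\pi^2}$; for $2\pi\le z\le 4\pi$ the overlap is $[z-2\pi,2\pi]$ of length $4\pi-z$, yielding $f_z(z)=\frac{4\pi-z}{4\pi^2}$. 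This reproduces the stated triangular density.

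This argument is elementary and poses no genuine obstacle; the only point requiring care is correctly tracking the endpoints $\max\{0,z-2\pi\}$ and $\min\{2\pi,z\}$ of the overlap across the two regimes, since a mismatch there is the sole source of error in such convolutions. As a consistency check I would verify that $f_z$ integrates to unity: the graph is a triangle on $[0,4\pi]$ with peak height $\frac{1}{2\pi}$, so its area is $\frac{1}{2}\cdot 4\pi\cdot\frac{1}{2\pi}=1$, confirming that $f_z$ is a valid density.
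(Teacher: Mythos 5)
Your convolution argument is correct, and the endpoint bookkeeping and the unit-area sanity check are both right. The paper states this lemma without proof (it is a standard preliminary result — the triangular density of the sum of two independent uniforms), so there is nothing to compare against; your derivation is exactly the canonical one that would be supplied.
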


\begin{lemma}\label{lemma3}
    If $x$ and $y$ are independent uniformly distributed RVs, $x\!\sim\!{\mathcal{U}}\!\left(0,2\pi\right)$, $y\!\sim\!{\mathcal{U}}\!\left(0,2\pi\right)$, then the PDF of $z\!=\!x\!-\!y$ is 
\begin{align}
f_z\left(z\right)=
\begin{cases}
\frac{2\pi+z}{4\pi^2},~-2\pi\le z<0,\\
\frac{2\pi-z}{4\pi^2},~~~~0\le z\le2\pi.
\end{cases}
\end{align}
\end{lemma}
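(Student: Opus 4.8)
The plan is to obtain the density of the difference $z = x - y$ directly by convolution. Since $x$ and $y$ are independent, the change of variables $x = z + y$ (with unit Jacobian) gives the standard formula $f_z(z) = \int_{-\infty}^{\infty} f_x(z+y)\, f_y(y)\, dy$, where both $f_x$ and $f_y$ equal $\frac{1}{2\pi}$ on $[0,2\pi]$ and vanish elsewhere. First I would observe that the integrand is nonzero only when $y \in [0,2\pi]$ and $z+y \in [0,2\pi]$ hold simultaneously, so the effective range of integration is $y \in [\max(0,-z),\, \min(2\pi,\, 2\pi-z)]$. Because $z = x-y$ necessarily lies in $[-2\pi, 2\pi]$, this already pins down the support of $f_z$ and matches the two branches of the statement.

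Next I would split into the two regimes dictated by the sign of $z$. For $0 \le z \le 2\pi$ the overlap reduces to $y \in [0,\, 2\pi-z]$, and integrating the constant $\frac{1}{4\pi^2}$ yields $f_z(z) = \frac{2\pi-z}{4\pi^2}$. For $-2\pi \le z < 0$ the limits become $y \in [-z,\, 2\pi]$, and the same integration gives $f_z(z) = \frac{2\pi+z}{4\pi^2}$. These two expressions coincide exactly with the claimed piecewise density.

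The only real subtlety is the correct bookkeeping of the overlap of the two support constraints, i.e.\ identifying which of $0$ or $-z$ is the larger lower limit and which of $2\pi$ or $2\pi-z$ is the smaller upper limit in each regime; the integrals themselves are trivial since the integrand is constant, so this is not so much an obstacle as the one place to be careful. As a consistency check I would verify that $f_z$ is continuous at $z=0$ (both branches give $\frac{1}{2\pi}$) and that $\int_{-2\pi}^{2\pi} f_z(z)\, dz = 1$, confirming a valid triangular density.

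As an even shorter route, I could reuse \emph{Lemma~\ref{lemma2}} by symmetry. Setting $y' = 2\pi - y$, which is again $\mathcal{U}(0,2\pi)$ since the uniform law is symmetric about its midpoint, gives $z = x - y = (x + y') - 2\pi$. Hence $f_z(z) = f_s(z+2\pi)$, where $s = x + y'$ has the density established in \emph{Lemma~\ref{lemma2}}; substituting its two branches and shifting the argument by $2\pi$ reproduces the stated result immediately, providing an independent confirmation of the convolution computation.
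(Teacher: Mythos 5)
Your convolution argument is correct and complete: the overlap bookkeeping in the two regimes of $z$ is handled properly, the resulting triangular density matches the statement, and the continuity and normalization checks are sound. The paper itself states this lemma without proof (it is listed among the preliminary lemmas in Appendix~A as a standard fact), so there is no authorial argument to compare against; your direct convolution is the natural route, and the reduction to \emph{Lemma~\ref{lemma2}} via the reflection $y' = 2\pi - y$ is a valid independent confirmation.
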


\section{Proof of Theorem \ref{theorem1}}\label{appB}
By substituting the RIS phase shifts in (\ref{eq10}), the mean of $ u_k\triangleq \Re \left\{ \mathbf{h}_p^H \mathbf{\Theta}\mathbf{h}_{r,k}\right \}$ is calculated as
\begin{align}
\mathbb{E}[u_k]&=\mathbb{E}\!\!\left[\!\Re\!\left\{\!\sum_{n=1}^{N}\!{\left|h_{p,n}^\ast\right|\!h_{r,k,n}\frac{\sum_{i=1}^{K}\!{w_ih_{r,i,n}^\ast}}{\left|\sum_{i=1}^{K}\!{w_ih_{r,i,n}^\ast}\right|}}\!\right\}\!\right]\nonumber\\
&=\mathbb{E}\!\!\left[\!\sum_{n=1}^{N}\!\left|h_{p,n}^\ast\right|\!\Re\!\left\{\!h_{r,k,n}\frac{\sum_{i=1}^{K}\!{w_ih_{r,i,n}^\ast}}{\left|\sum_{i=1}^{K}\!{w_ih_{r,i,n}^\ast}\right|}\!\right\}\!\right]\nonumber\\
&=\frac{\sqrt\pi}{2}\sum_{n=1}^{N}\mathbb{E}\!\left[\!\Re\!\left\{\!h_{r,k,n}\frac{\sum_{i=1}^{K}{w_ih_{r,i,n}^\ast}}{\left|\sum_{i=1}^{K}{w_ih_{r,i,n}^\ast}\right|}\right\}\right],
\label{appb1}
\end{align}
where the last step comes from the independence of $h_{p,n}$ and $h_{r,i,n}$, and using $\mathbb{E}\left[|h_{p,n}^\ast|\right]=\frac{\sqrt{\pi}}{2}$ since $h_{p,n}\sim \mathcal{CN}(0,1)$ \cite{wshi}.
% \textcolor{red}{Also, we omit subscripts $r, n$ for notational simplicity.} 

Further, by defining $a\!=\!w_kh_{r,k,n}\!\sim\!\mathcal{CN}\!\left(0,w_k^2\right)$ and $b\!=\!\sum_{i\neq k}{w_ih_{r,i,n}}\!\sim\!\mathcal{CN}\!\left(0,\sum_{i\neq k}w_i^2\right)$, we obtain
\begin{align}\label{eq13}
&\mathbb{E}\!\left[\Re\!\left\{h_{r,k,n}\frac{\sum_{i=1}^{K}{w_ih_{r,i,n}^\ast}}{\left|\sum_{i=1}^{K}{w_ih_{r,i,n}^\ast}\right|}\right\}\right]\nonumber \\
&=\frac{1}{w_k}\mathbb{E}\!\left[\Re\left\{a\frac{a^\ast+b^\ast}{\left|a^\ast+b^\ast \right|}\right\}\right]=\frac{1}{w_k}\mathbb{E}\!\left[\frac{2\left|a\right|^2+2\Re\left\{a b^\ast\right\}}{2\left|a+b\right|}\right]\nonumber \\
&=\frac{1}{w_k}\mathbb{E}\!\left[\frac{\left|a+b\right|^2+\left|a\right|^2-\left|b\right|^2}{2\left|a+b\right|}\right]\nonumber\\
&=\frac{1}{2w_k}\left\{\mathbb{E}\!\left[\left|a+b\right|\right]+\mathbb{E}\!\left[\frac{\left|a\right|^2}{\left|a+b\right|}\right]-\mathbb{E}\!\left[\frac{\left|b\right|^2}{\left|a+b\right|}\right]\right\}.
\end{align}
Noting that $\vert a+b \vert$, $\vert a\vert$, and $\vert b\vert$ are Rayleigh RVs, we calculate each term in (\ref{eq13}) as
\begin{align}\label{eq14}
\mathbb{E}\!\left[\left|a\!+\!b\right|\right]&\!=\!\frac{\sqrt\pi}{2}\sqrt{\sum_{i=1}^{K}w_i^2},\nonumber \\
\mathbb{E}\!\left[\frac{\left|a\right|^2}{\left|a\!+\!b\right|}\right]&\!=\!\mathbb{E}\!\left[\frac{\left|w_kh_k\right|^2}{\left|\sum_{i=1}^{K}{w_ih_i}\right|}\right] \!\overset{\mathrm{(a)}}{=}\!\frac{\sqrt{\pi}w_k^2}{2\sqrt{\sum_{i=1}^{K}w_i^2}}\left(2\!-\!\rho_1\right),\nonumber \\
\mathbb{E}\!\left[\frac{\left|b\right|^2}{\left|a\!+\!b\right|}\right]&\!=\!\mathbb{E}\!\left[\frac{\left|\sum_{i\neq k}{w_ih_i}\right|^2}{\left|\sum_{i=1}^{K}{w_ih_i}\right|}\right] \!\overset{\mathrm{(b)}}{=}\!\frac{\sqrt{\pi}\sum_{i\neq k}w_i^2}{2\sqrt{\sum_{i=1}^{K}w_i^2}}\left(2\!-\!\rho_2\right),
\end{align}
where $({\rm a})$ and $({\rm b})$ apply \emph{Lemma~1} in Appendix~\ref{appA}, and the correlation coefficients $\rho_1$ and $\rho_2$ are, respectively, given by
\begin{align}\label{Theorem2-E6}
\rho_1&=\mathbb{E}\left[\left|h_k\right|^2\frac{1}{\sum_{i=1}^{K}w_i^2}\left|\sum_{i=1}^{K}{w_ih_i}\right|^2\right]-1\nonumber\\
&=\frac{1}{\sum_{i=1}^{K}w_i^2}\mathbb{E}\left[w_k^2\left|h_k\right|^4+\left|h_k\right|^2\sum_{i\neq k}{w_i^2\left|h_i\right|^2}\right]-1\nonumber\\
&=\frac{2w_k^2+\sum_{i\neq k}w_i^2}{\sum_{i=1}^{K}w_i^2}-1=\frac{w_k^2}{\sum_{i=1}^{K}w_i^2},
\end{align}
and
\begin{align}\label{Theorem2-E7}
\!\!\rho_2&\!=\!\mathbb{E}\left[\frac{1}{\sum_{i\neq k}w_i^2}\left|\sum_{i\neq k}{w_ih_i}\right|^2\frac{1}{\sum_{i=1}^{K}w_i^2}\left|\sum_{i=1}^{K}{w_ih_i}\right|^2\right]-1\nonumber\\
&\!=\!\frac{1}{\sum_{i\!\neq\! k}w_i^2}\frac{1}{\sum_{i\!=\!1}^{K}w_i^2}\mathbb{E}\!\left[\left|\sum_{i\!\neq \!k}{w_ih_i}\right|^4\!\!\!+\!w_k^2\!\left|h_k\right|^2\!\left|\sum_{i\!\neq\! k}{w_ih_i}\right|^2\right]\!\!-\!1\nonumber\\
&\!=\!\frac{2\left(\sum_{i\neq k}w_i^2\right)^2+w_k^2\sum_{i\neq k}w_i^2}{\sum_{i\neq k}w_i^2\sum_{i=1}^{K}w_i^2}-1=\frac{\sum_{i\neq k}w_i^2}{\sum_{i=1}^{K}w_i^2}.
\end{align}
Then, by substituting (\ref{eq14})--(\ref{Theorem2-E7}) into (\ref{eq13}), we have 
\begin{align}\label{Theorem2-E8-add1}
\mathbb{E}\!\left[\Re\!\left\{h_{r,k,n}\frac{\sum_{i=1}^{K}{w_ih_{r,i,n}^\ast}}{\left|\sum_{i=1}^{K}{w_ih_{r,i,n}^\ast}\right|}\right\}\right]=\frac{\sqrt\pi}{2}\frac{w_k}{\sqrt{\sum_{i=1}^{K}w_i^2}},
\end{align} 
which is same for all $n$. Therefore, we obtain
\begin{align}\label{Theorem2-E8}
\mathbb{E}[u_k]=\frac{\pi Nw_k}{4\sqrt{\sum_{i=1}^{K}w_i^2}},
\end{align} 
and finally arrive at the result of $\mathbb{E}\left[{\ell_k}\right]$ in (\ref{th1_1}).

In addition, the mean of $u_m\triangleq \Re\!\left\{\mathbf{h}_p^H\mathbf{\Theta}\mathbf{h}_{r,m}\!\right\}$, is given as
\begin{align}
\!\mathbb{E}[u_m] \!=\!\mathbb{E}\!\left[\!\Re\!\left\{\sum_{n=1}^{N}{\left|h_{p,n}^\ast\right|h_{r,m,n}\frac{\sum_{i=1}^{K}{w_ih_{r,i,n}^\ast}}{\left|\sum_{i=1}^{K}{w_ih_{r,i,n}^\ast}\right|}}\right\}\right]\nonumber\\
\!\!\overset{\mathrm{(c)}}{=}\!\Re\!\left\{\!N\!\cdot\!\mathbb{E}\!\left[\left|h_{p,n}^\ast\right|\right]\!\mathbb{E}\!\left[h_{r,m,n}\right]\!\mathbb{E}\!\!\left[\frac{\sum_{i=1}^{K}{w_ih_{r,i,n}^\ast}}{\left|\sum_{i=1}^{K}{w_ih_{r,i,n}^\ast}\right|}\right]\!\right\}\!=\!0,
\end{align}
where $\mathrm{(c)}$ exploits the independence of $\mathbf{h}_{r,m}$, $\mathbf{h}_{r,k}$, and $\mathbf{h}_{p}$, and the last equality comes from $h_{r,m,n}\!\sim\!\mathcal{CN}(0,1)$. Combined with the definition of $\ell_m$ in (\ref{interference}), we complete the proof.

\section{Proof of Theorem \ref{theorem2}}\label{appC}

Firstly, we express $\mathbb{E}\!\left[u_k^2 \right]$ in (\ref{th22}), where $\mathrm{(a)}$ exploits $\mathbb{E}\!\left[\left|h_{p,n}^\ast\right|^2\right]=1$ and the results in (\ref{Theorem2-E8-add1}). 
\begin{figure*}[t] 
\begin{align}
\mathbb{E}\left[u_k^2\right]&=\sum_{n=1}^{N}\mathbb{E}\left[\left|h_{p,n}^\ast\right|^2\right]\mathbb{E}\left[\left(\Re\left\{h_{r,k,n}\frac{\sum_{i=1}^{K}{w_ih_{r,i,n}^\ast}}{\left|\sum_{i=1}^{K}{w_ih_{r,i,n}^\ast}\right|}\right\}\right)^2\right]\nonumber\\
&\quad +\sum_{n=1}^{N}\sum_{n^\prime\neq n}\mathbb{E}\left[\left|h_{p,n}^\ast\right|\left|h_{p,n^\prime}^\ast\right|\right]\left(\mathbb{E}\left[\Re\left\{h_{r,k,n}\frac{\sum_{i=1}^{K}{w_ih_{r,i,n}^\ast}}{\left|\sum_{i=1}^{K}{w_ih_{r,i,n}^\ast}\right|}\right\}\right]\right)\left(\mathbb{E}\left[\Re\left\{h_{r,k,n^\prime}\frac{\sum_{i=1}^{K}{w_ih_{r,i,n^\prime}^\ast}}{\left|\sum_{i=1}^{K}{w_ih_{r,i,n^\prime}^\ast}\right|}\right\}\right]\right)\nonumber\\
&\overset{\mathrm{(a)}}{=}\sum_{n=1}^{N}\left\{\mathbb{E}\!\left[\left|h_{r,k,n}\frac{\sum_{i=1}^{K}w_i h_{r,i,n}^\ast}{\left|\sum_{i=1}^{K}w_i h_{r,i,n}^\ast\right|}\right|^2\right]\!-\mathbb{E}\!\left[\left(\!\Im\!\left\{h_{r,k,n}\frac{\sum_{i=1}^{K}w_ih_{r,i,n}^\ast}{\left|\sum_{i=1}^{K}w_ih_{r,i,n}^\ast\right|}\right\}\right)^2\right]\right\}+\frac{\pi N\!\left(N\!-\!1\right)}{4}\!\left(\!\frac{\sqrt\pi}{2}\frac{w_k}{\sqrt{\sum_{i=1}^{K}w_i^2}}\!\right)^2\nonumber \\
&=\sum_{n=1}^{N}\left(1-w_k^{-2}\mathbb{E}\!\left[\left(\Im\!\left\{a\frac{a^\ast+b^\ast}{\left|a^\ast+b^\ast\right|}\right\}\right)^2\right]\right)+\frac{\pi^2N\left(N-1\right)}{16}\frac{w_k^2}{\sum_{i=1}^{K}w_i^2},
\label{th22}
\end{align}
\hrulefill
\end{figure*}
Then, we obtain 
\begin{align}\label{Theorem3-E1-3}
\!\!\mathbb{E}\!\left[\!\left(\!\Im\!\left\{\!a\frac{a^\ast\!+\!b^\ast}{\left|a^\ast\!+\!b^\ast\right|}\!\right\}\!\right)^2\right]=\mathbb{E}\!\!\left[\!\left(\Im\left\{\frac{ab^\ast}{\left|a+b\right|}\right\}\right)^2\right]\nonumber\\
=\mathbb{E}\!\!\left[\frac{\left|a\right|^2\left|b\right|^2\sin^2\left(\angle a\!-\!\angle b\right)}{\left|a\right|^2\!+\!\left|b\right|^2\!+\!2\!\left|a\right|\!\left|b\right|\!\cos{\left(\angle a\!-\!\angle b\right)}}\!\right]~~~\nonumber\\
=\frac{1}{4}\mathbb{E}_{\left|a\right|,\left|b\right|}\!\left[q^2\mathbb{E}_z\!\!\left[\frac{\sin^2z}{p\!+\!q\cos{z}}\right]\right],~~~~~~~~~~
\end{align}
where $\angle a\!\sim\!{\mathcal{U}}\!\left(0,2\pi\right)$, $\angle b\!\sim\!{\mathcal{U}}\!\left(0,2\pi\right)$, $p\!=\!\left|a\right|^2\!+\!\left|b\right|^2$, $q\!=\!2\!\left|a\right|\!\left|b\right|$, and $z\!=\!\angle a\!-\!\angle b$. Utilizing the PDF in \emph{Lemma~\ref{lemma3}}, we have
\begin{align}\label{Theorem3-E1-4}
\mathbb{E}_z\!\!\left[\frac{\sin^2\!z}{p\!\!+\!\!q\!\cos{z}}\right]&\!\!=\!\!\int_{-2\pi}^{0}\!{\frac{\sin^2z}{p\!\!+\!\!q\!\cos{z}}\frac{2\pi\!\!+\!\!z}{4\pi^2}{\rm{d}}z}\!+\!\!\int_{0}^{2\pi}\!\!\!{\frac{\sin^2\!z}{p\!\!+\!\!q\!\cos{z}}\frac{2\pi\!\!-\!\!z}{4\pi^2}{\rm{d}}z}\nonumber\\
&\!\!\mathop=^{\left(\rm b\right)}\!\!\int_{0}^{2\pi}\!\!{\frac{\sin^2t}{p\!\!+\!\!q\!\cos{t}}\frac{t}{4\pi^2}{\rm{d}}t}\!+\!\!\int_{0}^{2\pi}\!\!{\frac{\sin^2z}{p\!\!+\!\!q\!\cos{z}}\frac{2\pi\!-\!z}{4\pi^2}{\rm{d}}z}\nonumber\\
% &\!\!=\!\!\frac{1}{2\pi}\!\int_{0}^{2\pi}\!{\frac{\sin^2z}{p+q\cos{z}}{\rm{d}}z}\nonumber\\
&\!\!=\!\!\frac{1}{2\pi}\!\int_{0}^{\pi}\!{\frac{\sin^2z}{p+q\cos{z}}{\rm{d}}z}\!+\!\frac{1}{2\pi}\!\int_{\pi}^{2\pi}\!{\frac{\sin^2z}{p+q\cos{z}}{\rm{d}}z}\nonumber\\
&\!\!\mathop=^{\left(\rm c\right)}\!\!\frac{1}{2\pi}\!\int_{0}^{\pi}\!{\frac{\sin^2z}{p+q\cos{z}}{\rm{d}}z}\!+\!\frac{1}{2\pi}\!\int_{0}^{\pi}\!{\frac{\sin^2\mu}{p-q\cos{\mu}}{\rm{d}}z}\nonumber\\
&\!\!\mathop=^{\left(\rm d\right)}\!\!\frac{1}{q^2}\left(p-\sqrt{p^2-q^2}\right),
\end{align}
where $({\rm b})$ is obtained by letting $t\!=\!2\pi\!+\!z$, $({\rm c})$ follows from $\mu\!=\!2\pi\!-\!z$, and $({\rm d})$ is calculated by using \cite[Eq.~(3.644.4)]{table}. Then, by substituting (\ref{Theorem3-E1-4}) into (\ref{Theorem3-E1-3}), we have
\begin{align}\label{Theorem3-E1-5}
\mathbb{E}\!\left[\!\left(\!\Im\!\left\{\!a\frac{a^\ast\!+\!b^\ast}{\left|a^\ast\!+\!b^\ast\right|}\!\right\}\!\right)^2\right]&\!=\!\frac{1}{4}\mathbb{E}_{\left|a\right|,\left|b\right|}\left[p-\sqrt{p^2-q^2}\right]\nonumber\\
&\!=\!\frac{1}{4}\mathbb{E}_{\left|a\right|,\left|b\right|}\!\left[\left(\left|a\right|^2\!+\!\left|b\right|^2\!-\!\left|\left|a\right|^2\!-\!\left|b\right|^2\right|\right)\right]\nonumber\\
&\!=\!\frac{1}{2}\mathbb{E}_{\left|a\right|,\left|b\right|}\!\left[\min\!\left\{\left|a\right|^2\!,\!\left|b\right|^2\right\}\right]\nonumber\\
&\!\mathop=^{\left(\rm e\right)}\!\frac{w_k^2\left(\sum_{i\neq k}w_i^2\right)}{2\sum_{i=1}^{K}w_i^2},
\end{align}
where $({\rm e})$ is obtained by using the fact that $\left|a\right|^2\!\sim\! {\rm Exp}\!\left(w_k^{-2}\right)$, $\left|b\right|^2\!\sim\! {\rm Exp}\!\left(\frac{1}{\sum_{i\neq k}w_i^2}\right)$ and applies \emph{Lemma~\ref{lemma4}}. Plugging (\ref{Theorem3-E1-5}) into (\ref{th22}), we have
\begin{align}\label{th26}
    \mathbb{E}[u_k^2]=\frac{N}{2}\!+\!\frac{8N+\pi^2N(N-1)}{16}\frac{w_k^2}{\sum_{i=1}^{K}w_i^2}.
\end{align}
Hence, we calculate the variance of $\ell_k$ as 
\begin{align}\label{eq_app_1}
\mathbb{V}\left [ \ell_k \right] & = \frac{\beta_k^2 p_k}{\lambda^2}\left(\mathbb{E}[u_k^2]-\left(\mathbb{E}[u_k]\right)^2\right)\nonumber\\
&=\frac{\beta_k^2 p_k}{\lambda^2}\left(\frac{N}{2}+\frac{(8-\pi^2)w_k^2N}{16\sum_{i=1}^{K}w_i^2}
\right).
\end{align}
Given that $\lambda$ scales with $N$, we conclude that $\mathbb{V}\left [ \ell_k \right]$ diminishes by the order of $\mathcal{O}\left(\frac{1}{N}\right)$.

For interference signals, we first calculate $\mathbb{E}[u_m^2]$ in (\ref{Theorem3-E3-1}),
\begin{figure*}[t] 
\begin{align}
\mathbb{E}\!\left[u_m^2\right]&=\mathbb{E}\!\left[\left(\Re\!\left\{\sum_{n=1}^{N}{\left|h_{p,n}^\ast\right|h_{r,m,n}\frac{\sum_{i=1}^{K}w_ih_{r,i,n}^\ast}{\left|\sum_{i=1}^{K}w_ih_{r,i,n}^\ast\right|}}\right\}\right)^2\right]\nonumber\\
&=\mathbb{E}\!\left[\left(\Re\!\left\{\sum_{n=1}^{N}{\left|h_{p,n}^\ast\right|\left|h_{r,m,n}\right|{\rm e}^{j\left(\delta_{1,n}+\delta_{2,n}\right)}}\right\}\right)^2\right]=\mathbb{E}\!\left[\left(\sum_{n=1}^{N}{\left|h_{p,n}^\ast\right|\left|h_{r,m,n}\right|\cos{\delta_n}}\right)^2\right]\nonumber\\
&=\mathbb{E}\!\left[\sum_{n=1}^{N}{\left|h_{p,n}^\ast\right|^2\left|h_{r,m,n}\right|^2\cos^2{\delta_n}}\right]\!+\!\mathbb{E}\!\left[\sum_{n=1}^{N}\sum_{n^\prime\neq n}{\left|h_{p,n}^\ast\right|\left|h_{r,m,n}\right|\left|h_{p,n^\prime}^\ast\right|\left|h_{r,m,n^\prime}\right|\cos{\delta_n}\cos{\delta_{n^\prime}}}\right]\nonumber\\
&=\sum_{n=1}^{N}\mathbb{E}\left[\cos^2{\delta_n}\right]+\frac{\pi^2}{16}\sum_{n=1}^{N}\sum_{n^\prime\neq n}\mathbb{E}\!\left[\cos{\delta_n}\right]\mathbb{E}\!\left[\cos{\delta_{n^\prime}}\right],
\label{Theorem3-E3-1}
\end{align}
\hrulefill
\end{figure*}
where $\delta_{1,n}\!=\!\angle h_{r,m,n}\!\sim\!{\mathcal{U}}\!\left(0,2\pi\right)$, $\delta_{2,n}\!=\!\angle\left(\sum_{i=1}^{K}w_ih_{r,i,n}^\ast\right)\!\sim\!{\mathcal{U}}\!\left(0,2\pi\right)$, $\delta_n\!=\!\delta_{1,n}\!+\!\delta_{2,n}$. Utilizing the PDF given in \emph{Lemma~\ref{lemma2}}, we have
\begin{align}\label{Theorem3-E3-2}
\mathbb{E}\!\left[\cos{\delta_n}\right]&\!=\!\int_{0}^{2\pi}{\frac{\delta_n}{4\pi^2}\cos\delta_n{\rm{d}}\delta_n}+\int_{2\pi}^{4\pi}{\frac{4\pi-\delta_n}{4\pi^2}\cos\delta_n{\rm{d}}\delta_n}\nonumber\\
&\!\mathop=^{\left(\rm f\right)}\!\int_{0}^{2\pi}{\frac{\delta_n}{4\pi^2}\cos \delta_n{\rm{d}}\delta_n}\!+\!\int_{0}^{2\pi}{\frac{4\pi\!-\!\left(t\!+\!2\pi\right)}{4\pi^2}\cos t{\rm{d}}t}\nonumber\\
&\!=\!\frac{1}{2\pi}\int_{0}^{2\pi}{\cos t{\rm{d}}t}=0,
\end{align}
where $({\rm f})$ is obtained by letting $t=-2\pi+\delta_n$. Similarly,
\begin{align}\label{Theorem3-E3-3}
\mathbb{E}\!\left[\cos^2{\delta_n}\right]&\!=\!\int_{0}^{2\pi}{\frac{\delta_n}{4\pi^2}\cos^2{\delta_n}{\rm{d}}\delta_n}\!+\!\int_{2\pi}^{4\pi}{\frac{4\pi\!-\!\delta_n}{4\pi^2}\cos^2{\delta_n}{\rm{d}}\delta_n}\nonumber\\
&\!=\!\frac{1}{2\pi}\int_{0}^{2\pi}{\cos^2 t{\rm{d}}t}=\frac{1}{2}.
\end{align}
Hence, by substituting (\ref{Theorem3-E3-2}) and (\ref{Theorem3-E3-3}) into  (\ref{Theorem3-E3-1}), we~have
\begin{align}\label{eq_app_2}
\mathbb{E}[u_m^2]=\frac{N}{2}.
\end{align}
Furthermore, the variance of $\ell_m$ is equal to
\begin{align}
    \mathbb{V}[\ell_m ]=\frac{\beta_m^2p_m N}{2\lambda^2}.
\end{align}
Similar to (\ref{eq_app_1}), we complete the proof.

\section{Proof of Theorem \ref{theorem3}}\label{appD}

Before deriving the MSE, we first calculate $\mathbb{E} [u_k u_{k^\prime}]$ in (\ref{th24}), where $k,k^\prime\!\in\!\mathcal{K}$, $k^\prime\!\neq\!k$, ${\rm A}\!\!=\!\!w_kh_{r,k,n}\frac{\sum_{i=1}^{K}\!{w_ih_{r,i,n}^\ast}}{\left|\sum_{i=1}^{K}\!{w_ih_{r,i,n}^\ast}\right|}$, and ${\rm B}\!\!=\!\!w_{k^\prime}h_{r,k^\prime\!,n}\frac{\sum_{i=1}^{K}\!{w_ih_{r,i,n}^\ast}}{\left|\sum_{i=1}^{K}\!{w_ih_{r,i,n}^\ast}\right|}$. 
\begin{figure*}[t]
    \begin{align}\label{th24}
    \mathbb{E}[u_k u_{k^\prime}]
&=\mathbb{E}\left[\sum_{n=1}^{N}{\left|h_{p,n}^\ast\right|^2\Re\left\{h_{r,k,n}\frac{\sum_{i=1}^{K}{w_ih_{r,i,n}^\ast}}{\left|\sum_{i=1}^{K}{w_ih_{r,i,n}^\ast}\right|}\right\}\Re\left\{h_{r,k^\prime,n}\frac{\sum_{i=1}^{K}{w_ih_{r,i,n}^\ast}}{\left|\sum_{i=1}^{K}{w_ih_{r,i,n}^\ast}\right|}\right\}}\right]\nonumber \\
&\quad+\mathbb{E}\left[\sum_{n=1}^{N}{\sum_{n^\prime\neq n}\left|h_{p,n}^\ast\right|\left|h_{p,n^\prime}^\ast\right|\Re\left\{h_{r,k,n}\frac{\sum_{i=1}^{K}{w_ih_{r,i,n}^\ast}}{\left|\sum_{i=1}^{K}{w_ih_{r,i,n}^\ast}\right|}\right\}\Re\left\{h_{r,k^\prime,n^\prime}\frac{\sum_{i=1}^{K}{w_ih_{r,i,n^\prime}^\ast}}{\left|\sum_{i=1}^{K}{w_ih_{r,i,n^\prime}^\ast}\right|}\right\}}\right]\nonumber \\
&=\sum_{n=1}^{N}\mathbb{E}\left[\Re\left\{h_{r,k,n}\frac{\sum_{i=1}^{K}{w_ih_{r,i,n}^\ast}}{\left|\sum_{i=1}^{K}{w_ih_{r,i,n}^\ast}\right|}\right\}\Re\left\{h_{r,k^\prime,n}\frac{\sum_{i=1}^{K}{w_ih_{r,i,n}^\ast}}{\left|\sum_{i=1}^{K}{w_ih_{r,i,n}^\ast}\right|}\right\}\right]+\frac{\pi^2N\left(N-1\right)}{16}\frac{w_kw_{k^\prime}}{\sum_{i=1}^{K}w_i^2}\nonumber \\
&=\sum_{n=1}^{N}w_k^{-1}w_{k^\prime}^{-1}\mathbb{E}\!\left[\Re\!\left\{{\rm A}\right\}\!\Re\!\left\{{\rm B}\right\}\right]\!+\!\frac{\pi^2N\!\left(\!N\!-\!1\right)}{16}\frac{w_kw_{k^\prime}}{\sum_{i=1}^{K}w_i^2},
    \end{align}
    \hrulefill
\end{figure*}
Then, we have
\begin{align}
&\!\!\mathbb{E}\!\left[\Re\!\left\{{\rm A}\right\}\!\Re\!\left\{{\rm B}\right\}\right]\!\mathop=^{\left(\rm a\right)}\!-\mathbb{E}\!\left[\Im\!\left\{{\rm A}\right\}\!\Im\!\left\{{\rm B}\right\}\right]\nonumber\\
&\!\!\!\!=\!-\!\frac{1}{2}\!\left\{\!\mathbb{E}\!\!\left[\left(\Im\!\left\{{\rm A}\!+\!{\rm B}\right\}\right)^2\right]\!-\!\mathbb{E}\!\!\left[\left(\Im\!\left\{{\rm A}\right\}\right)^2\right]\!-\!\mathbb{E}\!\!\left[\left(\Im\!\left\{{\rm B}\right\}\right)^2\right]\!\right\},
\label{Theorem4-E2-2}
\end{align}
where $({\rm a})$ comes from the fact that $\Re\!\left\{\mathbb{E}\left[{\rm A}^\ast{\rm B}\right]\right\}\!=\!0$. By letting $c\!\!=\!\!w_kh_k\!+\!w_{k^\prime}h_{k^\prime}\!\sim\!\mathcal{CN}\!\left(0,w_k^2\!+\!w_{k^\prime}^2\right)$ and $d\!=\!\sum_{i\neq k,k^\prime}{w_ih_i}\!\sim\!\mathcal{CN}\!\left(0,\sum_{i\neq k,k^\prime}w_i^2\right)$ and referring to (\ref{Theorem3-E1-5}), we first obtain
\begin{align}
&\mathbb{E}\!\left[\left(\Im\left\{{\rm A}\!+\!{\rm B}\right\}\right)^2\right]\!=\!\mathbb{E}\!\left[\left(\Im\left\{c\frac{c^\ast+d^\ast}{\left|c^\ast+d^\ast\right|}\right\}\right)^2\right]\nonumber\\
\!=\!&\frac{1}{2}\mathbb{E}_{\left|c\right|,\left|d\right|}\!\left[\min\left\{\left|c\right|^2,\left|d\right|^2\right\}\right]\!=\!\frac{\left(w_k^2\!+\!w_{k^\prime}^2\right)\sum_{i\neq k,k^\prime}w_i^2}{2\sum_{i=1}^{K}w_i^2}.
\label{Theorem4-E2-3}
\end{align}
Similarly, we have
\begin{align}
\mathbb{E}\!\left[\left(\Im\!\left\{{\rm A}\right\}\right)^2\right]=\frac{w_k^2\sum_{i\neq k}w_i^2}{2\sum_{i=1}^{K}w_i^2},
\label{Theorem4-E2-4}
\end{align}
and
\begin{align}
\mathbb{E}\!\left[\left(\Im\!\left\{{\rm B}\right\}\right)^2\right]=\frac{w_{k^\prime}^2\sum_{i\neq k^\prime}w_i^2}{2\sum_{i=1}^{K}w_i^2}.
\label{Theorem4-E2-5}
\end{align}
By combining all the above results, we obtain
\begin{align}\label{th25}
\mathbb{E}\left[u_ku_k^{\prime}\right]=\frac{N}{2}\frac{w_kw_{k^\prime}}{\sum_{i=1}^{K}w_i^2}+\frac{\pi^2N\left(N-1\right)}{16}\frac{w_kw_{k^\prime}}{\sum_{i=1}^{K}w_i^2}.
\end{align}

Next, for $\mathbb{E} [u_m u_{m^\prime}]$, $m,m^\prime\!\in\!\mathcal{M}$ and $m^\prime\!\neq\!m$,  we calculate it in (\ref{Theorem3-E4}). Similar to (\ref{Theorem3-E4}), we derive that $\mathbb{E}[u_k u_m]=0$.
\begin{figure*}[t] 
\begin{align}
\mathbb{E}\left[{u}_m{u}_{m^\prime}\right]\!&=\!\mathbb{E}\!\!\left[\!\Re\!\left\{\!\sum_{n=1}^{N}{\left|h_{p,n}^\ast\right|h_{r,m^\prime,n}\frac{\Re\!\left\{\mathbf{h}_p^H\mathbf{\Theta}\mathbf{h}_{r,m}\right\}\sum_{i=1}^{K}w_ih_{r,i,n}^\ast}{\left|\sum_{i=1}^{K}w_ih_{r,i,n}^\ast\right|}}\right\}\!\right]\!=\!\sum_{n=1}^{N}\Re\!\left\{\!\mathbb{E}\!\!\left[\left|h_{p,n}^\ast\right|h_{r,m^\prime,n}\frac{\Re\left\{\mathbf{h}_p^H\mathbf{\Theta}\mathbf{h}_{r,m}\right\}\sum_{i=1}^{K}w_ih_{r,i,n}^\ast}{\left|\sum_{i=1}^{K}w_ih_{r,i,n}^\ast\right|}\!\right]\!\right\}\nonumber\\
&=\sum_{n=1}^{N}\Re\left\{\mathbb{E}\left[h_{r,m^\prime,n}\right]\mathbb{E}\left[\left|h_{p,n}^\ast\right|\frac{\Re\left\{\mathbf{h}_p^H\mathbf{\Theta}\mathbf{h}_{r,m}\right\}\sum_{i=1}^{K}w_ih_{r,i,n}^\ast}{\left|\sum_{i=1}^{K}w_ih_{r,i,n}^\ast\right|}\right]\right\}=0.
\label{Theorem3-E4}
\end{align}
\hrulefill
% \vspace{-0.3cm}
\end{figure*}

\begin{figure*}[t] 
\begin{align}
{\rm MSE}=&\mathbb{E}\left[\left\Vert\sum_{k\in\mathcal{K}}{{\bar{h}}_k\mathbf{g}_{t,k}}+\sum_{m\in\mathcal{M}}{{\bar{h}}_m{\tilde{\mathbf{g}}}_{t,m}}+{\bar{\mathbf{z}}}_t\right\Vert^2\right]\nonumber\\
=&\sum_{k\in\mathcal{K}}\mathbb{E}\left[\left({\bar{h}}_k\right)^2\right]\mathbb{E}\left[\Vert\mathbf{g}_{t,k}\Vert^2\right]+\sum_{k\in\mathcal{K}}\sum_{k^\prime\neq k}\mathbb{E}\left[{\bar{h}}_k{\bar{h}}_{k^\prime}\right]\mathbb{E}\left[\mathbf{g}_{t,k}^T\mathbf{g}_{t,k^\prime}\right]+\sum_{m\in\mathcal{M}}\mathbb{E}\left[\left({\bar{h}}_m\right)^2\right]\mathbb{E}\left[\Vert{\tilde{\mathbf{g}}}_{t,m}\Vert^2\right]\nonumber\\
&+\sum_{m\in\mathcal{M}}\sum_{m^\prime\neq m}\mathbb{E}\left[{\bar{h}}_m{\bar{h}}_{m^\prime}\right]\mathbb{E}\left[{\tilde{\mathbf{g}}}_{t,m}^T{\tilde{\mathbf{g}}}_{t,m^\prime}\right]+2\sum_{k\in\mathcal{K}}\sum_{m\in\mathcal{M}}\mathbb{E}\left[{\bar{h}}_k{\bar{h}}_m\right]\mathbb{E}\left[\mathbf{g}_{t,k}^T{\tilde{\mathbf{g}}}_{t,m}\right]+\mathbb{E}\left[\left\Vert{\bar{\mathbf{z}}}_t\right\Vert^2\right],
\label{MSEnew}
\end{align}
\hrulefill
\end{figure*}

By substituting (\ref{gradient}) and (\ref{ghata}), we reformulate the MSE in~(\ref{MSE}) as (\ref{MSEnew}), where ${\bar{h}}_k\triangleq\ell_k-\frac{1}{K}$ and ${\bar{h}}_m\triangleq\ell_m$.
Based on the above results, we first calculate the MSE for Scheme \Rmnum{1} as follows.

\emph{1) Calculate $\mathbb{E}\left[\left({\bar{h}}_k\right)^2\right]$}: We have
\begin{align}\label{Theorem3-E1}
\mathbb{E}\!\left[\!\left({\bar{h}}_k\right)^2\!\right]&\!=\!\mathbb{E}\!\left[\!\left(\!\ell_k\!-\!\frac{1}{K}\!\right)^2\right]\mathop=^{\left(\rm b\right)}\mathbb{E}\!\left[\left(\ell_k\!\right)^2\right]\!-\!\frac{1}{K^2}\nonumber\\
&\!=\!\left(\!\frac{4}{\pi N\!\sqrt K}\!\right)^2\!\mathbb{E}\!\left[u_k^2\right]\!-\!\frac{1}{K^2}\overset{\mathrm{(c)}}{=}\!\frac{8\left(K\!+\!1\right)\!-\!\pi^2}{\pi^2NK^2},
\end{align}
where $({\rm b})$ comes from $\mathbb{E}\left[{\ell_k}\right]\!=\!\frac{1}{K}$, and $\mathrm{(c)}$ comes from (\ref{th26}).

\emph{2) Calculate $\mathbb{E}\left[{\bar{h}}_k{\bar{h}}_{k^\prime}\right]$}: Similar to (\ref{Theorem3-E1}), we have
\begin{align}\label{Theorem3-E2}
\mathbb{E}\!\left[{\bar{h}}_k{\bar{h}}_{k^\prime}\right]&\!=\!\left(\!\frac{4}{\pi N\!\sqrt K}\!\right)^2\!\!\mathbb{E}\!\left[u_k u_{k^\prime}\right]\!-\!\frac{1}{K^2}\overset{\mathrm{(d)}}{=}\frac{8-\pi^2}{\pi^2NK^2},
\end{align}
where $\mathrm{(d)}$ comes from (\ref{th25}).

\emph{3) Calculate $\mathbb{E}\left[\left({\bar{h}}_m\right)^2\right]$}: According to (\ref{eq_app_2}), we have
\begin{align}\label{Theorem3-E3-4}
\mathbb{E}\!\left[\!\left({\bar{h}}_m\right)^2\!\right]\!=\!\frac{\beta_m^2 P_m}{\lambda^2}\mathbb{E}\!\left[u_m^2\right]\!=\!\frac{8G^2\alpha_m^{2}}{\pi^2NK\min\limits_{k\in\mathcal{K}}{\alpha_k^2}}.
\end{align}

\emph{4) Calculate $\mathbb{E}\!\left[{\bar{h}}_m{\bar{h}}_{m^\prime}\right]$}: We have $\mathbb{E}\!\left[{\bar{h}}_m{\bar{h}}_{m^\prime}\right]\!=\!0$ from (\ref{Theorem3-E4}).

\emph{5) Calculate $\mathbb{E}\!\left[{\bar{h}}_k{\bar{h}}_m\right]$}: Similarly, we have $\mathbb{E}\!\left[{\bar{h}}_k{\bar{h}}_m\right]\!=\!0$.

\emph{6) Calculate $\mathbb{E}\left[\left\Vert{\bar{\mathbf{z}}}_t\right\Vert^2\right]$}: Finally, we have
\begin{align}
\mathbb{E}\!\left[\left\Vert{\bar{\mathbf{z}}}_t\right\Vert^2\right]=\frac{\sigma^2D}{2\lambda^2}=\frac{8G^2\sigma^2D}{\pi^2N^2K\min\limits_{k\in\mathcal{K}}{\alpha_k^2}}.
\label{Theorem3-E6}
\end{align}

Therefore, by substituting all the derived expectations into (\ref{MSEnew}), we obtain the MSE in (\ref{MSE1}).

Next, for Scheme \Rmnum{2}, we derive its MSE as follows, where similar results $\mathbb{E}\!\left[{\bar{h}}_m{\bar{h}}_{m^\prime}\right]\!=\!0$ and $\mathbb{E}\!\left[{\bar{h}}_k{\bar{h}}_m\right]\!=\!0$ are omitted.

\emph{1) Calculate $\mathbb{E}\left[\left({\bar{h}}_k\right)^2\right]$}: Similar to Scheme \Rmnum{1}, we have
\begin{align}\label{Theorem4-E1-3}
\mathbb{E}\!\left[\left({\bar{h}}_k\right)^2\right]&\!=\!\mathbb{E}\left[\left(\ell_k\right)^2\right]-\frac{1}{K^2}\!=\!\left(\frac{\alpha_k}{\lambda G}\right)^2\mathbb{E}\!\left[u_k^2\right]-\frac{1}{K^2}\nonumber \\
&=\frac{8\left(\alpha_k^2 \sum_{i=1}^{K}\alpha_i^{-2}+1\right)-\pi^2}{\pi^2NK^2}.
\end{align}

\emph{2) Calculate $\mathbb{E}\left[{\bar{h}}_k{\bar{h}}_{k^\prime}\right]$}: We have
\begin{align}\label{Theorem4-E2}
\mathbb{E}\!\left[{\bar{h}}_k{\bar{h}}_{k^\prime}\!\right]\!&\!=\!\frac{\alpha_k\alpha_{k^\prime}}{\left(\lambda G\right)^2}\mathbb{E}\!\left[u_ku_{k^\prime}\right]\!\!-\!\!\frac{1}{K^2}=\frac{8-\pi^2}{\pi^2NK^2}.
\end{align}

\emph{3) Calculate $\mathbb{E}\left[\left({\bar{h}}_m\right)^2\right]$}: We have
\begin{align}\label{Theorem4-E3}
\mathbb{E}\!\left[\left({\bar{h}}_m\right)^2\right]\!=\!\frac{N\beta_m^2 P_m}{2\lambda^2} \!=\!\frac{8G^2\alpha_m^2\sum_{i=1}^{K}\alpha_i^{-2}}{\pi^2NK^2}.
\end{align}

\emph{4) Calculate $\mathbb{E}\left[\left\Vert{\bar{\mathbf{z}}}_t\right\Vert^2\right]$}: Finally, we have
\begin{align}
\mathbb{E}\!\left[\left\Vert{\bar{\mathbf{z}}}_t\right\Vert^2\right]=\frac{\sigma^2D}{2\lambda^2}=\frac{8G^2\sigma^2D\sum_{i=1}^{K}\alpha_i^{-2}}{\pi^2N^2K^2}.
\label{Theorem4-E6}
\end{align}

Therefore, by substituting all the derived expectations into (\ref{MSEnew}), we complete the proof.

% \vspace{-8.pt}
\section{Proof of Theorem \ref{theorem4}}\label{appE}
\begin{figure*}
\begin{align} \label{eq69}
    \mathbb{E}\left[F(\mathbf{w}_{t+1})-F(\mathbf{w}_t)\right] 
    \leq -\left(\eta_t-\frac{L\eta_t^2}{2}\right) \mathbb{E}\left[\left\Vert \nabla F(\mathbf{w}_t) \right \Vert^2 \right] +\frac{L\eta_t^2}{2} \mathbb{E}\left[\left\Vert  \mathbf{g}_t-\nabla F(\mathbf{w}_t) \right \Vert^2\right]+ \frac{L\eta_t^2}{2} \mathrm{MSE}.
\end{align}
\hrulefill
% \vspace{-0.24cm}
\end{figure*}
Firstly, by exploiting \emph{Assumption 2}, we easily verify that $\mathbb{E}\!\left[ \mathbf{g}_t\right]\!\!=\!\!\nabla F(\mathbf{w}_t)$ and hence we conclude that the obtained $\hat{\mathbf{g}}_t$ is an unbiased estimation of the global gradient $\nabla F(\mathbf{w}_t)$. Building upon \emph{Assumption 1} and the unbiased gradient estimation, we perform the similar steps in \cite[Eq. (28)]{imperfectCSI} and obtain (\ref{eq69}).

For the second term in (\ref{eq69}), we exploit \emph{Assumption 2} and bound it by
\begin{align}\label{eq70}
     \mathbb{E}\!\left[\left\Vert  \mathbf{g}_t\!-\!\nabla F(\mathbf{w}_t) \right \Vert^2\right] &=  \frac{1}{K^2} \mathbb{E}\!\left[\left\Vert \sum_{k\in \mathcal{K}}\left(\mathbf{g}_{t,k}\!-\!\nabla F_k(\mathbf{w}_t) \right)\right \Vert^2\right]\nonumber \\
     &\leq \frac{1}{K^2}\sum_{k\in\mathcal{K}}\chi^2=\frac{\chi^2}{K}.
\end{align}

As for the MSE term, we first bound $\mathbb{E}\left[ \Vert \mathbf{g}_{t,k} \Vert^2 \right]$ by
\begin{align}
    &\mathbb{E}\!\left[ \Vert \mathbf{g}_{t,k} \Vert^2 \right]=\mathbb{E}\left[\Vert\mathbf{g}_{t,k}-\nabla F_k(\mathbf{w}_t)+\nabla F_k(\mathbf{w}_t)\Vert^2\right]\nonumber \\
    &\overset{\mathrm{(a)}}{\leq}\!  \mathbb{E}\!\left[\left\Vert \nabla F_k(\mathbf{w}_t) \right \Vert^2 \right]\!+\!\chi^2\!\overset{\mathrm{(b)}}{\leq} \!\xi^2  \mathbb{E}\!\left[\left\Vert \nabla F(\mathbf{w}_t) \right \Vert^2 \right]\!+\!\chi^2,
\end{align}
where $\mathrm{(a)}$ comes from \emph{Assumption 2} and $\mathrm{(b)}$ comes from \emph{Assumption 3}. Then, the cross term is bounded by
\begin{align}
    &\mathbb{E}\left[\mathbf{g}_{t,k}^T\mathbf{g}_{t,k^\prime}\right]=\nabla^T \!F_k(\mathbf{w}_t)\nabla F_{k^\prime}(\mathbf{w}_t)\nonumber \\
    &\overset{\mathrm{(c)}}{\leq} \Vert \nabla F_{k}(\mathbf{w}_t)\Vert \cdot \Vert \nabla F_{k^\prime}(\mathbf{w}_t)\Vert \overset{\mathrm{(d)}}{\leq} \xi^2  \mathbb{E}\left[\left\Vert \nabla F(\mathbf{w}_t) \right \Vert^2 \right],
\end{align}
where $\mathrm{(c)}$ is due to the Cauchy-Schwarz inequality and $\mathrm{(d)}$ also comes from \emph{Assumption 3}. Hence, we derive the upper bound of the MSE in Scheme \Rmnum{1} as 
\begin{align}\label{eq73}
    \mathrm{MSE}_1\!\leq & \frac{(16\!-\!\pi^2)\xi^2}{\pi^2N} \mathbb{E}\left[\left\Vert \nabla F(\mathbf{w}_t) \right \Vert^2 \right]\!+\!\frac{8(K\!+\!1)\!-\!\pi^2}{\pi^2 NK}\chi^2\nonumber \\
    &+\!\Delta_{1,2}\!+\!\Delta_{1,3}.
\end{align}
Similarly, the MSE of Scheme \Rmnum{2} is bounded by
\begin{align}\label{eq74}
    \mathrm{MSE}_2\!\leq &\frac{\frac{8}{K^2}\!\!\sum_{k\in\mathcal{K}}\alpha_k^{2}\sum_{i\in\mathcal{K}}\alpha_i^{-2}\!\!+\!8\!-\!\!\pi^2}{\pi^2 N}\xi^2  \mathbb{E}\!\left[\left\Vert \nabla F(\mathbf{w}_t) \right \Vert^2 \right]\nonumber \\
    &\!\!\!\!\!\!\!\!\!\!\!\!\!\!+\frac{\frac{8}{K}\!\!\sum_{k\in\mathcal{K}}\alpha_k^{2}\sum_{i\in\mathcal{K}}\alpha_i^{-2}\!+\!8\!-\!\pi^2}{\pi^2 NK}\chi^2\!+\!\Delta_{2,2}\!+\!\Delta_{2,3}.
\end{align}

Combining the results in (\ref{eq69}), (\ref{eq70}) and (\ref{eq73}), and setting $\eta_t =\frac{1}{\varpi_{1}\sqrt{T}}$, we evaluate the FL convergence with Scheme \Rmnum{1} as 
\begin{align}
    \frac{1}{T}\!\sum_{t=0}^{T-1}\!\mathbb{E}\!\!\left[\left\Vert \nabla F(\mathbf{w}_t) \right \Vert^2 \right] &\!\!\overset{\mathrm{(e)}}{\leq} \!\frac{2\varpi_{1}}{\sqrt{T}}\!\!\left(F(\mathbf{w}_0)\!-\!\mathbb{E}\!\left[ F(\mathbf{w}_{T})\right] \!+\!\frac{\varepsilon_1}{2\varpi_1^2} \right)\nonumber \\
    &\!\!\!\!\!\!\!\!\!\!\!\!\!\!\overset{\mathrm{(f)}}{\leq} \!\frac{2\varpi_{1}}{\sqrt{T}}\!\!\left(\!\!F(\mathbf{w}_0)\!-\!\mathbb{E}\!\left[ F(\mathbf{w}^*)\right] \!+\!\frac{\varepsilon_1}{2\varpi_1^2} \right),
\end{align}
where $\mathrm{(e)}$ is due to the fact that $\frac{1}{\sqrt{T}}\!-\!\frac{1}{2T}\!\!\geq\!\!\frac{1}{2\sqrt{T}}$ and  $\mathrm{(f)}$ is because $F(\mathbf{w}^*)\!\!\leq\!\!F(\mathbf{w}_T)$. Similarly, we obtain the convergence result for Scheme~\Rmnum{2} and complete the proof.

\end{appendices}

% \vspace{-8.pt}

\end{document}